%
%
%
%
%
\RequirePackage{fix-cm}
\documentclass[smallextended]{svjour3}       
\smartqed  
%
%
%
%
%
\journalname{Journal of Elasticity}
\usepackage{amsmath,bm,color,xcolor}
\usepackage[pdftex]{graphicx}
\usepackage{amsfonts}
\usepackage{enumerate}
\usepackage{amssymb}
\usepackage{mathrsfs}
\usepackage{cancel}
\usepackage{graphicx}
\usepackage{color}
\usepackage{url}

\newcommand{\om}{\Omega}

\newcommand{\R}{\mathbb{R}}

\newcommand{\n}{{\mathbf n}}
\newcommand{\bv}{{\mathbf b}}
\newcommand{\muv}{{\bm \mu}}
\newcommand{\x}{{\mathbf x}}
\newcommand{\kv}{{\mathbf k}}
\newcommand{\rv}{{\mathbf r}}
\newcommand{\m}{{\mathbf m}}
\newcommand{\y}{{\mathbf y}}

\newcommand{\vv}{{\mathbf v}}
\newcommand{\cv}{{\mathbf c}}
\newcommand{\dv}{{\mathbf d}}
\newcommand{\p}{{\mathbf p}}
\newcommand{\q}{{\mathbf q}}
\newcommand{\Q}{{\mathbf Q}}

\newcommand{\Rv}{{\mathbf R}}
\newcommand{\A}{{\mathbf A}}
\newcommand{\av}{{\mathbf a}}
\newcommand{\B}{{\mathbf B}}
\newcommand{\F}{{\mathbf F}}
\newcommand{\G}{{\mathbf G}}
\newcommand{\Pv}{{\mathbf P}}
\newcommand{\nuv}{{\bm \nu}}
\newcommand{\omv}{{\bm \omega}}

\newcommand{\e}{{\mathbf e}}

\newcommand{\C}{\mathbf{C}}
\newcommand{\mL}{{\mathcal L}}

\newcommand{\M}{{\mathbf M}}
\newcommand{\bzero}{{\mathbf 0}}

\newcommand{\tr}{\rm tr\,}

\newcommand{\SO}{{\rm SO}}
\newcommand{\Othree}{{\rm O} (3)}

\newcommand{\SOthree}{\SO (3)}

\newcommand{\be}{\begin{eqnarray}}
\newcommand{\ee}{\end{eqnarray}}

\renewcommand{\leq}{\leqslant}
\renewcommand{\geq}{\geqslant}

\newcommand{\half}{\frac{1}{2}}
\newcommand{\1}{{\mathbf 1}}

\newcommand{\Zed}{{\mathbb Z}}

\begin{document}

\title{Slip and twinning  in Bravais lattices
}


\author{John M. Ball        
}


\institute{John M. Ball \at
             Heriot-Watt University and Maxwell Institute for Mathematical Sciences, Edinburgh. \email{jb101@hw.ac.uk}}
             

\date{}

\maketitle\vspace{-1in}
\centerline{\it In memoriam Jerry Ericksen, from whom I learnt so much.}\vspace{.2in}

\begin{abstract}
A unified treatment of slip and twinning in Bravais lattices is given, focussing on the case of cubic symmetry, and using the Ericksen energy well formulation, so that interfaces correspond to rank-one connections between the infinitely many crystallographically equivalent energy wells. Twins are defined to be such rank-one connections involving a nontrivial reflection of the lattice across some plane. The slips and twins minimizing shear magnitude for cubic lattices are rigorously calculated, and the conjugates of these and other slips analyzed. It is observed that all rank-one connections between the energy wells for the dual of a Bravais lattice can be obtained explicitly from those for the original lattice, so that in particular the rank-one connections for fcc can be obtained explicitly from those for bcc.
\keywords{slip \and twinning \and rank-one connections}
\subclass{74N15 \and 74G65  }
\end{abstract}

\section{Introduction}
\label{intro}
This paper gives a unified treatment of slip and twinning in Bravais lattices, focussing on the case of cubic symmetry, and using the Ericksen energy well formulation, so that interfaces correspond to rank-one connections between the infinitely many crystallographically equivalent energy wells. Inevitably there is a considerable overlap with other such treatments, most notably with that of Pitteri \& Zanzotto \cite{pitterizanzotto03} (especially Chapter 8). 

A main contribution is that we calculate rigorously the slips and Type 1/Type 2 twins $\F=\1+\av\otimes\n$ that minimize the shear magnitude $|\av|$ for cubic lattices, these corresponding to well-known statements in the materials science literature. This reduces to the minimization of products $G(\p)H(\q)$ of quadratic forms,   where $\p,\q\in\Zed^3$ are such that $\p\cdot\q=0$ for slip and $\p\cdot\q=2$ for Type 1/Type 2 twins.  The principle of  minimizing  $|\av|$ as a means of selecting preferred slips and twins has been proposed for slips by Chalmers \& Martius \cite{chalmers1952slip} and for twins by Kiho \cite{kiho1954,kiho1958} and Jaswon  \& Dove \cite{jaswondove1957}.  In fact the Type 1/Type 2 twins minimize the shear magnitude among all rank-one connections between the energy wells. We also calculate the conjugates to the minimizing slips, together with certain other slips for bcc, showing that they are all twins. Further, we determine  all the slips for cubic lattices whose conjugates are also slips.

An apparently new observation is that all rank-one connections for the dual of a Bravais lattice can be obtained explicitly from those for the original lattice. In particular the rank-one connections for fcc lattices can be obtained explicitly in terms of those for bcc lattices (and vice versa). 
 
The plan of the paper is as follows. In Section \ref{bravais} we review standard material on Bravais lattices. Here and throughout the paper we use a matrix formulation, so that the lattice vectors $\bv_1,\bv_2,\bv_3$ are written as the columns of a matrix $\B$; at least for the author this makes calculations easier to follow.  We define lattice planes and the dual lattice, largely following Ashcroft \& Mermin \cite{ashcroftmermin}.  Then in Section \ref{ericksen} we recall the Ericksen energy-well picture, whereby the  macroscopic free-energy density inherits an infinite family of crystallographically equivalent energy wells from those of the lattice free-energy via the Cauchy-Born rule. 

In Section \ref{interfaces} we review standard results for the existence
 of rank-one connections as well as establishing (Theorem \ref{bccfcc}) the correspondence between rank-one connections for Bravais lattices and their duals. In Section \ref{slip} we give a characterization of slip systems in terms of rank-one connections not involving lattice rotation (Theorem \ref{slipchar}), and   determine the slips that minimize the shear magnitude (Theorem \ref{slipthm}) for the simple cubic, bcc and fcc lattices. In Section \ref{twins} we give a definition of twins in terms of interfaces separating nontrivially reflected lattices, and determine (Theorem \ref{twinthm}) the Type 1/Type 2 twins minimizing the shear magnitude for the simple cubic, bcc and fcc lattices.  In Section \ref{gen} we address the problem of determining general rank-one connections of minimum shear magnitude, showing in particular (Corollary \ref{cor1}) that the twins in Theorem \ref{twinthm} minimize the shear amplitude among all rank-one connections. Finally in Section \ref{conjslip} we determine (Theorem \ref{slipslip}) all slips for cubic lattices whose conjugates are also slips, and show (Theorems \ref{bccslipconj}, \ref{bccslipconj1}) that the conjugates of all the specific slips discussed in Section \ref{slip} are twins of various types. 
 
This paper focusses on cubic Bravais lattices, but extensions of the results to noncubic lattices, multilattices and crystals undergoing martensitic phase transformations would be valuable.

\section{Bravais lattices}
\label{bravais}
\subsection{Definitions and basic properties}
Denote by $\e_i$  the unit vector in the $i^{th}\rm$ coordinate direction of $\R^3$. A {\it Bravais lattice} is an infinite lattice of points in $\R^3$ generated by  linear combinations with integer coefficients of three linearly independent basis vectors $\bv_1, \bv_2,\bv_3$. Representing these basis vectors with respect to the orthonormal basis $\{\e_i\}$, and letting $\B=(\bv_1,\bv_2,\bv_3)$ be the matrix with columns $\bv_i$,  so that $B_{ij}=\bv_j\cdot \e_i$, we write the corresponding Bravais lattice as
\begin{eqnarray*}{\mathcal L}(\B)&=&\{m_1\bv_1+m_2\bv_2+m_3\bv_3: m_i\in {\mathbb Z}\}\\
&=&\{\B\m:\m\in{\mathbb Z}^3\},
\end{eqnarray*}
where ${\mathbb Z}$ denotes the set of integers. 
All lattice sites $\p\in{\mathcal L}(\B)$ are equivalent,  i.e. $$\p+{\mathcal L}(\B)={\mathcal L}(\B).$$
We denote by $\R^{3\times 3}$ the space of real $3\times 3$ matrices, by $\Zed^{3\times 3}$ the set of $3\times 3$ matrices with integer entries, and
\begin{align*}
GL(3,\R)&=\{\F\in \R^{3\times 3}: \det\F\neq 0\},\\
GL^+(3,\R)&=\{\F\in \R^{3\times 3}:  \det\F > 0\},\\
GL(3,\mathbb Z)&=\{{\bm\mu\in\Zed^{3\times 3}}:  \det{\bm\mu}=\pm1\},\\
GL^+(3,\mathbb Z)&=\{{\bm\mu\in\Zed^{3\times 3}}:  \det{\bm\mu}=1\}=SL(3,\mathbb Z),\\
{\rm O}(3)&=\{\Q\in \R^{3\times 3}:\Q^T\Q=\1\},\\
{\rm SO}(3)&=\{\Q\in{\rm O}(3):\det\Q=1\}.
\end{align*}
For $\F\in \R^{3\times 3}$ we denote the Euclidean norm of $\F$ by $|\F|:=(\tr \F^T\F)^\half$.

The following standard theorem (see, for example, \cite{Ericksen77},\cite[Proposition 3.1]{pitterizanzotto03}) characterizes the sets of basis vectors that are equivalent in the sense that they generate the same Bravais lattice.
\begin{theorem}
\label{equiv}
${\mathcal L}(\B)={\mathcal L}(\mathbf C)$ if and only if
$$\mathbf C=\B {\bm\mu}, \mbox{ for some {\boldmath $\mu$}}\in GL(3,\mathbb Z).$$
\end{theorem}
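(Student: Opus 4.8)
The plan is to prove the two implications separately; the reverse implication is a short computation, while the forward implication needs a determinant argument that I expect to be the only real point.

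For sufficiency, suppose $\mathbf C=\B\muv$ with $\muv\in GL(3,\mathbb Z)$. First I would note that since $\muv$ has integer entries, $\muv\m\in\Zed^3$ for every $\m\in\Zed^3$, so $\mathbf C\m=\B(\muv\m)\in{\mathcal L}(\B)$ and hence ${\mathcal L}(\mathbf C)\subseteq{\mathcal L}(\B)$. The key observation for the reverse inclusion is that because $\det\muv=\pm1$, the cofactor (Cramer) formula for the inverse shows that $\muv^{-1}$ again has integer entries, i.e. $\muv^{-1}\in GL(3,\mathbb Z)$. Writing $\B=\mathbf C\muv^{-1}$ and repeating the argument gives ${\mathcal L}(\B)\subseteq{\mathcal L}(\mathbf C)$, so the two lattices coincide.

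For necessity, suppose ${\mathcal L}(\B)={\mathcal L}(\mathbf C)$. Each column $\cv_j$ of $\mathbf C$ lies in ${\mathcal L}(\mathbf C)={\mathcal L}(\B)$, so $\cv_j=\B\muv_j$ for some $\muv_j\in\Zed^3$; assembling these as columns yields an integer matrix $\muv$ with $\mathbf C=\B\muv$. By the same reasoning applied to the columns of $\B$, there is an integer matrix $\nuv$ with $\B=\mathbf C\nuv$. Substituting gives $\B=\B\muv\nuv$, and since $\B$ is invertible (its columns are linearly independent) I can cancel it to obtain $\muv\nuv=\1$. Taking determinants, $\det\muv\,\det\nuv=1$ with both factors integers, which forces $\det\muv=\pm1$; hence $\muv\in GL(3,\mathbb Z)$, as required.

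The argument has no serious obstacle; the one place demanding care is the integrality of $\muv^{-1}$ in the sufficiency direction, equivalently the observation that a product of two integer determinants equal to $1$ forces each to be $\pm1$. This is exactly what pins the relating matrix down to the group $GL(3,\mathbb Z)$ rather than to an arbitrary integer matrix, and I would make sure to invoke the linear independence of the columns of $\B$ explicitly, since cancelling $\B$ is the step where invertibility of the basis matrix is used.
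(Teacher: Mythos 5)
Your proof is correct and follows essentially the same route as the paper's: express each set of basis vectors as integer combinations of the other to get $\mathbf C=\B\muv$ and $\B=\mathbf C\nuv$ with integer matrices, cancel $\B$ to conclude $\nuv=\muv^{-1}$, and deduce $\det\muv=\pm1$ from the integrality of both determinants. You merely make explicit two steps the paper leaves implicit (the cofactor formula for the integrality of $\muv^{-1}$ and the determinant product argument), which is fine.
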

\begin{proof}
Let $\B=(\bv_1,\bv_2,\bv_3),\;\mathbf C=(\cv_1,\cv_2,\cv_3)$. If ${\mathcal L}(\B)={\mathcal L}({\bf C})$ then $\bv_i=\mu_{ji}{\bf c}_j$ for some $\bm \mu=(\mu_{ij})\in {\mathbb Z}^{3\times 3}$, so that $\B={\bf C}{\bm \mu}$. Similarly ${\bf C}=\B{\bm \mu}'$ for some ${\bm \mu}'\in{\mathbb Z}^{3\times 3}$.
So ${\bm\mu}'={\bm\mu}^{-1}$ and $\bm\mu\in GL(3,{\mathbb Z})$.

Conversely, if $\B={\bf C}{\bm\mu}$ then $\bv_i={\mu}_{ji}{\bf c}_j$ and so ${\mathcal L}(\B)\subset{\mathcal L}(\bf C)$. Similarly  ${\mathcal L}(\bf C)\subset{\mathcal L}(\B)$. \qed
\end{proof}
\begin{corollary}
\label{cor2}
If ${\bf F}\in GL(3,\mathbb R)$, then ${\mathcal L}({\bf F}\B)={\mathcal L}(\B)$ if and only if
$${\bf F}=\B{\bm\mu}\B^{-1} \mbox { for some }{\bm \mu}\in GL(3,{\mathbb Z}).$$
\end{corollary}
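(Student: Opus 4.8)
The plan is to deduce this directly from Theorem~\ref{equiv} by applying that theorem to the two bases $\B$ and $\F\B$. First I would check that $\F\B$ really does generate a Bravais lattice to which Theorem~\ref{equiv} is applicable: since $\F\in GL(3,\R)$ and $\B$ is invertible (its columns being the linearly independent basis vectors $\bv_1,\bv_2,\bv_3$), we have $\det(\F\B)=\det\F\,\det\B\neq 0$, so $\F\B\in GL(3,\R)$ and its columns form a genuine lattice basis. Hence ${\mathcal L}(\F\B)$ is a well-defined Bravais lattice.

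Next I would set ${\bf C}=\F\B$ in Theorem~\ref{equiv}. That theorem then asserts that ${\mathcal L}(\F\B)={\mathcal L}(\B)$ holds if and only if $\F\B=\B{\bm\mu}$ for some ${\bm\mu}\in GL(3,{\mathbb Z})$. The remaining step is the elementary algebraic rearrangement: because $\B$ is invertible, right-multiplication by $\B^{-1}$ shows that $\F\B=\B{\bm\mu}$ is equivalent to $\F=\B{\bm\mu}\B^{-1}$, and conversely any $\F$ of this form satisfies $\F\B=\B{\bm\mu}$. Combining these equivalences yields exactly the stated characterization.

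I do not expect any genuine obstacle here, since all the substantive content is already carried by Theorem~\ref{equiv}; the corollary is a direct translation of that result from the language of equivalent bases into the language of a deformation $\F$ acting on a fixed basis $\B$. The only points requiring (trivial) attention are the verification that $\F\B$ is invertible, so that the theorem genuinely applies, and the observation that conjugation by $\B$ sets up a bijection between the integer matrices ${\bm\mu}\in GL(3,{\mathbb Z})$ and the deformations $\F$ fixing ${\mathcal L}(\B)$.
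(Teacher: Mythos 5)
Your proof is correct and is exactly the intended argument: the paper states Corollary~\ref{cor2} without proof precisely because it follows by taking ${\bf C}={\bf F}\B$ in Theorem~\ref{equiv} and conjugating by the invertible matrix $\B$. Nothing further is needed.
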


The {\it point group} $P(\B)$ is
\begin{eqnarray}\nonumber
P(\B)&=&\{\Q\in \Othree:\mathcal L(\Q\B)=\mathcal L(\B)\}\\&=&\{\Q\in \Othree:\Q=\B\muv\B^{-1}\text{ for some }\muv\in GL(3,\mathbb Z)\}.\label{pointgroup}
\end{eqnarray}
\subsection{Lattice planes and the dual lattice}
\label{lplanes}
A {\it lattice plane} is a plane $\Pi(\n)=\{\x\in\R^3:\x\cdot\n=k\}$ with unit normal $\n$ such that $\Pi(\n)\cap{\mathcal L}(\B)$ contains 3 non-collinear points. Equivalently, $\Pi(\n)\cap{\mathcal L}(\B)$ is a translate of a 2D Bravais lattice of the form
$$\{r_1\m_1+r_2\m_2: r_1,r_2\in {\mathbb Z}\},$$
where $\m_1,\m_2\in {\mathcal L}(\B)$ and $\m_1,\m_2$ are linearly independent. (Taking without loss of generality $k=0$ this can be proved by first choosing $\m_1$ to be a nonzero vector in $\Pi(\n)\cap{\mathcal L}(\B)$ of minimum length, and then $\m_2$ a nonzero vector  in $\Pi(\n)\cap{\mathcal L}(\B)$ not parallel to $\m_1$ and of minimum length.)

Then, following Ashcroft \& Mermin \cite{ashcroftmermin}, there exists $\p\in{\mathcal L}(\B)$ with $\p\cdot\n>0$ such that 
$${\mathcal L}(\B)=\bigcup_{r\in{\mathbb Z}}\left(\left({\mathcal L}\left(\B\right)\cap\Pi\left(\n\right)\right)+r\p\right),$$
so that ${\mathcal L}(\B)$ is the union of its intersection with a family of planes parallel to $\Pi(\n)$ with interplane spacing 
$d=\p\cdot\n$.

The {\it dual} (or {\it reciprocal}) {\it lattice} of ${\mathcal L}(\B)$ is the set
$$\{\kv\in\R^3:\kv\cdot\p\in{\mathbb Z} \text{ for all }\p\in{\mathcal L}(\B)\}.$$
The vector $\kv$ belongs to the dual lattice if and only if $\kv\cdot\B\e_i\in {\mathbb Z}$ for each $i$, which holds if and only if $\kv=\sum_{i=1}^3r_i\B^{-T}\e_i$ for $\rv\in{\mathbb Z}^3$. Hence the dual lattice is the Bravais lattice $\mathcal L(\B^{-T})$.

\begin{theorem}[see {\cite[p90]{ashcroftmermin}}]
\label{lplane} $\Pi(\n)=\{\x:\x\cdot \n=0\}$ is a lattice plane if and only if there exists $\kv\in {\mathcal L}(\B^{-T})\setminus \{0\}$ with $\kv$ parallel to $\n$, and then  $d^{-1}$ is the minimum length of such a vector $\kv$.
\end{theorem}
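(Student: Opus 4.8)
The plan is to prove the two implications and then read off the minimum-length statement from the same divisibility computation; note that here $\Pi(\n)$ passes through the origin, so $\bzero\in\Pi(\n)\cap{\mathcal L}(\B)$ is automatically one lattice point. For the forward implication I would invoke the structure result quoted just before the theorem: if $\Pi(\n)$ is a lattice plane then there is $\p\in{\mathcal L}(\B)$ with $\p\cdot\n=d>0$ such that ${\mathcal L}(\B)=\bigcup_{r\in\Zed}\big(({\mathcal L}(\B)\cap\Pi(\n))+r\p\big)$. The consequence I actually want from this is that $\{\q\cdot\n:\q\in{\mathcal L}(\B)\}=d\Zed$. Then I would simply verify that $\kv_0:=\n/d$ lies in the dual lattice: for any $\q\in{\mathcal L}(\B)$ we have $\kv_0\cdot\q=(\q\cdot\n)/d\in\Zed$. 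Since $\kv_0$ is a nonzero scalar multiple of $\n$ with $|\kv_0|=d^{-1}$, this exhibits the required vector $\kv\in{\mathcal L}(\B^{-T})\setminus\{0\}$ parallel to $\n$.

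For the reverse implication, given $\kv\in{\mathcal L}(\B^{-T})\setminus\{0\}$ parallel to $\n$, I would study the group homomorphism $\phi:{\mathcal L}(\B)\to\Zed$, $\phi(\p)=\kv\cdot\p$, which is integer-valued precisely because $\kv$ lies in the dual lattice. Because $\kv\neq 0$ and ${\mathcal L}(\B)$ spans $\R^3$, some lattice point satisfies $\kv\cdot\p\neq 0$, so the image is a nontrivial subgroup of $\Zed$, hence free of rank one. By the structure of finitely generated free abelian groups ($\,{\mathcal L}(\B)\cong\Zed^3$ splits as kernel plus image$\,$) the kernel is free of rank two. Since $\kv$ is parallel to $\n$, this kernel is exactly $\Pi(\n)\cap{\mathcal L}(\B)$, so taking two independent generators together with the origin produces three non-collinear lattice points in $\Pi(\n)$, which is the definition of a lattice plane.

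The minimum-length claim should then fall out of the divisibility already in hand. Any dual vector parallel to $\n$ has the form $\kv=c\n$ for some $c\in\R$, and testing it against the vector $\p$ with $\p\cdot\n=d$ gives $cd=\kv\cdot\p\in\Zed$, whence $c\in d^{-1}\Zed$. Therefore every nonzero such $\kv$ satisfies $|\kv|\geq d^{-1}$, with equality attained by $\kv_0=\n/d$, so $d^{-1}$ is indeed the minimum length of a dual vector parallel to $\n$.

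I expect the main care points to be twofold. First, in the reverse direction one must confirm the kernel has rank exactly two rather than three; this is the only place needing that the image of $\phi$ is nontrivial, which in turn uses $\kv\neq 0$ together with ${\mathcal L}(\B)$ spanning $\R^3$. Second, and more substantively, the whole minimum-length argument hinges on the $\p$ supplied by the structure result realizing the \emph{minimal} spacing, i.e.\ on $\{\q\cdot\n:\q\in{\mathcal L}(\B)\}$ being exactly $d\Zed$ and not a proper subgroup; I would make sure to extract this equality cleanly from the displayed union decomposition before using it, since testing dual vectors against this particular $\p$ is what forces $c\in d^{-1}\Zed$.
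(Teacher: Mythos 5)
Your proof is correct, and the forward implication together with the minimum-length computation matches the paper's in substance: the paper simply asserts that $d^{-1}\n\in{\mathcal L}(\B^{-T})$ and that ``there is no shorter such dual lattice vector,'' and your test of a general $c\n$ against the lattice vector $\p$ with $\p\cdot\n=d$ (giving $cd\in\Zed$, hence $|c|\geq d^{-1}$) is precisely the justification the paper leaves implicit. The reverse implication is where you genuinely diverge. The paper argues concretely: it writes $\B^T\kv$ as an integer vector, picks two linearly independent integer vectors $\m,\m'$ orthogonal to it (available because the cross products of that integer vector with the $\e_i$ are not all parallel), and observes that $\B\m,\B\m'$ are lattice vectors lying in $\Pi(\n)$, giving the three non-collinear points directly. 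You instead invoke the structure theory of finitely generated abelian groups: the homomorphism $\phi(\p)=\kv\cdot\p$ has nontrivial image in $\Zed$, the resulting short exact sequence splits, and the kernel $\Pi(\n)\cap{\mathcal L}(\B)$ is free of rank two. Both routes are valid; the paper's is more elementary and fully constructive, while yours generalizes verbatim to lattices in $\R^n$ and makes the 2D-Bravais-lattice structure of $\Pi(\n)\cap{\mathcal L}(\B)$ immediate. One small point you should spell out: two $\Zed$-independent generators of the rank-two kernel are automatically $\R$-linearly independent (if $\m_2=\lambda\m_1$ with $\lambda$ real then $\B^{-1}\m_2=\lambda\B^{-1}\m_1$ forces $\lambda$ rational and hence a nontrivial integer relation), which is what you actually need for $\bzero,\m_1,\m_2$ to be non-collinear.
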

\begin{proof}
We give a slightly different proof to that in \cite{ashcroftmermin} for the convenience of the reader. If $\Pi(\n)$ is a lattice plane then $d^{-1}\n\cdot\bv\in\Zed$ for all $\bv\in\mL(\B)$, so that $d^{-1}\n\in\mL(\B^{-T})$, and there is no shorter such dual lattice vector.

Conversely, if $\kv\in\mL(\B^{-T})\setminus\{\bzero\}$ with $\kv$ parallel to $\n$ then $\kv\cdot\B\e_i=n_i\in\Zed$ for each $i$, and hence $\B^T\kv=\sum_{i=1}^3n_i\e_i:=\n\in\Zed^3$. Pick $\m,\m'\in\Zed^3\setminus\{\bzero\}$ with $\m\cdot\n=\m'\cdot\n=0$ and $\m,\m'$ linearly independent, which is possible because the vectors $\n\wedge\e_i$ are not all parallel. Then $\B^T\kv\cdot\m=\B^T\kv\cdot\m'=0$, so that $\kv\cdot\B\m=\kv\cdot\B\m'=0$ and $\Pi(\n)$ is a lattice plane.\qed
\end{proof}
\begin{remark}
\label{rem1}
Suppose $\kv=\B^{-T}\q$ is parallel to $\n$, where $\q=(q_1,q_2,q_3)\in\Zed^3\setminus\{\bzero\}$. Then the formula for $d$ can be rewritten as 
\be
\label{interplane}d^{-1}=\frac{|\B^{-T}\q|}{\text{gcd}(q_1,q_2,q_3)},
\ee
where $\text{gcd}(q_1,q_2,q_3)$ is the greatest common divisor of the $q_i$. Indeed  $$\displaystyle\bar\q:=\frac{\q}{\text{gcd}(q_1,q_2,q_3)}\in \Zed^3$$ and the $\bar q_i$ have no common factor. If $\hat \q\in\Zed^3$ is parallel to $\q$ we have that  $n\hat \q=m\bar \q$ for  coprime integers $n\geq 1$ and $m$. Thus $n=1$ and so $|\bar\q|\leq |\hat\q|$ and thus $|\B^{-T}\hat\q|\geq |\B^{-T}\bar\q|$.
\end{remark}

\subsection{Cubic lattices}
As examples we consider \\ \noindent
(i) the {\it simple cubic} lattice, for which $\B=\B_{\rm c}$, where
\be
\label{0}
\B_{\rm c}=\Q\in {\rm O}(3).
\ee
 Equivalently 
\be
\label{00}\B_{\rm c}^T\B_{\rm c}=\1.
\ee\noindent
\begin{figure}[htbp] 
  \centering
  \includegraphics[width=4.67in,height=2.16in,keepaspectratio]{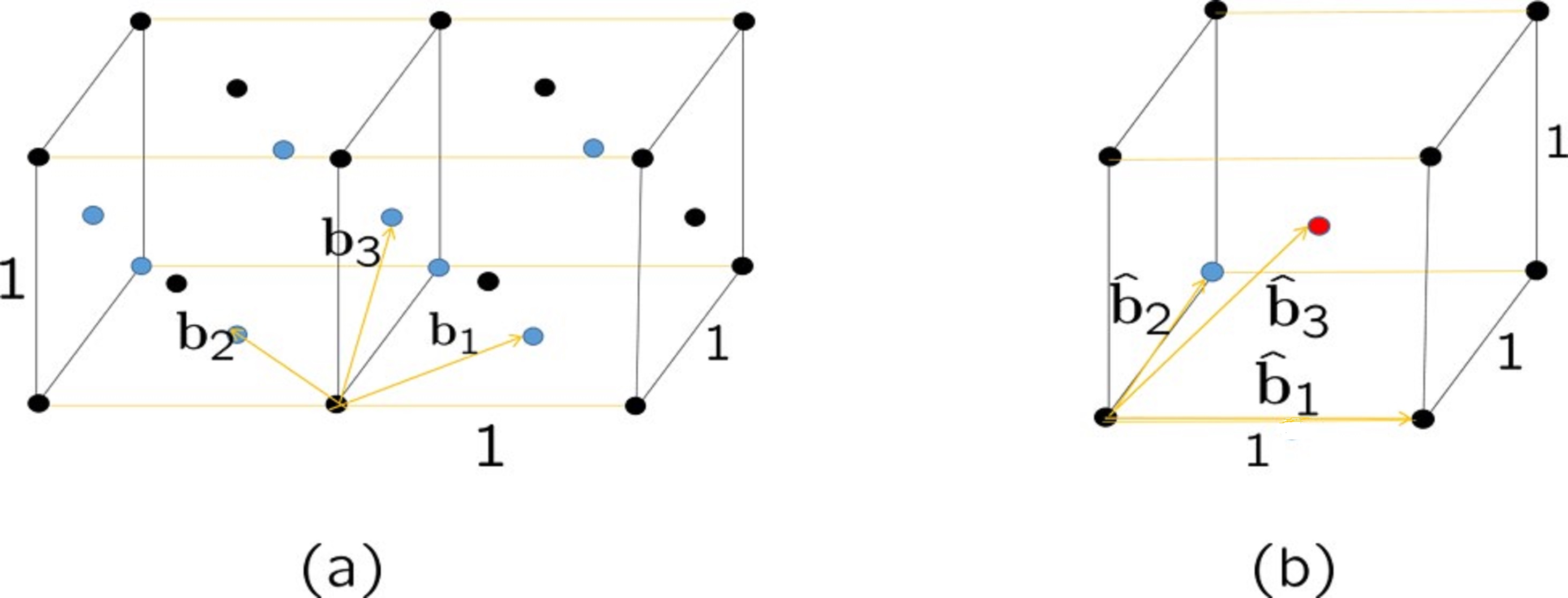}
  \caption{(a) face-centred cubic lattice, (b) body-centred cubic lattice}
  \label{fig:fccbcc}
\end{figure}\noindent
(ii)  the {\it face-centred cubic} (fcc) lattice (see Fig. \ref{fig:fccbcc} (a))  with the indicated basis vectors, for which $\B=\B_{\rm fcc}$, where
 \be \B_{\rm fcc}=\half\Q\left(\begin{array}{rrr}1&-1&\,\,0\\1&1&\,\,1\\0&0&\,\,1\end{array}\right), \Q\in {\rm O}(3).\ee
 Equivalently
\be
\label{1}\B_{\rm fcc}^T\B_{\rm fcc}=\frac{1}{4}\left(\begin{array}{ccc}2\,\,&0\,\,\,&1\\0\,\,\,&2\,\,\,&1\\1\,\,\,&1\,\,\,&2\end{array}\right),
\ee
(iii) the {\it body-centred cubic} (bcc) lattice (see Fig. \ref{fig:fccbcc} (b)), for which one could take the basis vectors  $\hat\bv_i$ shown, but the usual  choice is $\B=\B_{\rm bcc}$, where
\be
\label{2}\B_{\rm bcc}=\frac{1}{2}\Q\left(\begin{array}{rrr}-1&1&1\\ 1&-1&1\\ 1&1&-1\end{array}\right), \Q\in O(3),\ee
 for which
\be
\label{3}\B_{\rm bcc}^T\B_{\rm bcc}=\frac{1}{4}\left(\begin{array}{rrr}3&-1&-1\\ -1&3&-1\\-1&-1&3\end{array}\right).
\ee
Below we always take $\Q=\1$ for these lattices.
\begin{remark}
\label{bccfcc0}\rm
It follows from Theorem \ref{equiv} that for fcc or bcc lattices it is not possible to choose lattice vectors such that $\B\in\Zed^{3\times 3}$. Taking the case of fcc, for example, this would imply that
\be
\label{fccbcc1}
\half\left(\begin{array}{rrr}1&-1&\,\,0\\1&1&\,\,1\\0&0&\,\,1\end{array}\right)\muv\in\Zed^{3\times  3}
\ee
for some $\muv\in GL(3,\Zed)$, so that $\half \mu_{3j}\in \Zed$ for each $j$, and hence $\mu_{3j}$ is even, implying that $\det \muv$ is even, a contradiction.
\end{remark}

 \begin{remark}
\label{bccfccdual}
We have that $\B_{\rm fcc}^T\B_{\rm bcc}=\half{\bm \omega}^{-1}$, where
\be
\label{omegadef} \omv=\left(\begin{array}{rrr}0&0&\,\,1\\ 0&-1&\,\,1\\ 1&0&\,0\end{array}\right)\in GL^+(3,\mathbb Z).
\ee 
Hence $\B_{\rm fcc}^{-T}=2\B_{\rm bcc}\bm \omega$, proving the well-known result that   the dual lattice to fcc {is} 2bcc.
\end{remark}

The simple cubic, bcc and fcc lattices all have the same point group $P^{48}$ consisting of the 48 matrices $\Q\in\Othree$ mapping $\Zed^3$ to itself (equivalently the matrices with a single entry $\pm 1$ in each row and column). We write 
\be
\label{P24}
P^{24}=P^{48}\cap \SOthree
\ee
 for the 24 such matrices $\Q$ with $\det\Q=1$.

\section{The Ericksen energy well picture}
\label{ericksen}
Suppose that the free energy per unit volume of a crystalline material with atoms at the points of the Bravais lattice $\mathcal L(\C)$, where $\C\in GL^+(3,\R)$,  is given by $\varphi(\C)\geq 0$. For simplicity we assume that the temperature is constant, so that there are no phase transformations. 

Natural requirements on $\varphi$ are \\
(i) ({\it frame-indifference}) $$\varphi(\Q\C)=\varphi(\C)\text{ for all  }\Q\in \SOthree,$$
(ii) ({\it invariance with respect to equivalent lattices}) $$\varphi(\C{\bm\muv})=\varphi(\C) \text{ for all } {\bm\muv}\in GL^+(3,\mathbb Z).$$
We assume further that $\varphi(\C)=0$ iff $\C=\Q\B\muv$ for $\Q\in \SOthree$, $\muv\in GL^+(3,\mathbb Z)$, so that the minimum value zero is attained only for the unstrained lattice.

As is standard, we use the Cauchy-Born rule to relate the macroscopic free-energy density $\psi$ to $\varphi$, thus defining an elastic free energy
$$I(\y)=\int_\Omega \psi(D\y(\x))\,d\x$$
for a deformation $\y:\Omega\to\R^3$, where $\om\subset \R^3$ is an open reference domain. 

 Choosing a reference configuration in which the crystal lattice is $\mathcal L(\B)$, where  $\B\in GL^+(3,\R)$,  we assume that 
$$\psi(\A)=\varphi(\A\B),\mbox{ for }\A\in GL^+(3,\R).$$
Thus $\psi\geq 0$ inherits from $\varphi$  the invariances\\
(i) $\psi(\Q\A)=\psi(\A)$ for all $\Q\in \SOthree$,\\
(ii) $\psi(\A\B{\bm \mu}\B^{-1})=\psi(\A)$ for all ${\bm\mu}\in GL^+(3,{\mathbb Z})$,\\
and $\psi$ has zero set
\be
\label{zeroset}\psi^{-1}(0)=\bigcup_{\muv\in GL^+(3,\mathbb Z)}\SOthree\B\muv\B^{-1}.
\ee

The energy wells $\SOthree\B\muv\B^{-1}$ are not all distinct, because $\SOthree\B\muv\B^{-1}=\SOthree\B\tilde\muv\B^{-1}$ iff
$$\B\muv\tilde\muv^{-1}\B^{-1}\in P(\B)\cap \SOthree.$$
But since $P(\B)$ is finite and $GL^+(3,\mathbb Z)$ is infinite, there are {\it infinitely many distinct energy wells}.

\section{Interfaces}
\label{interfaces}
\subsection{Rank-one connections}
\label{rone}
We are interested in possible planar interfaces between distinct constant gradients on the energy wells, that is in pairs of distinct matrices $\F,\G\in\psi^{-1}(0)$ with 
\be
\label{4}\F-\G=\av\otimes\n,\;\;\av,\n\in\R^3,|\n|=1.
\ee
We can assume that $\G=\1$, so that we are interested in the $\muv\in GL^+(3,\mathbb Z)$ with $\M=\B\muv\B^{-1}\not\in \SOthree$ such that 
\be
\label{4a}
\1+\av\otimes\n=\Rv \M \text{ for some }\av, \n\in\R^3, |\n|=1, \Rv\in \SOthree.
\ee
Note that since $\M$ is independent of the scale of $\mL(\B)$ (that is it is the same for $\mL(t\B)$ for any $t\neq 0$) the rank-one connections are also independent of scale. If \eqref{4a} holds then since $\det\M=1$ we have that 
\be
\label{adotn}
\av\cdot\n=0, 
\ee
and thus
 \be
\label{4a1} |\av|^2=|\M|^2-3.
\ee
We denote by $0<\lambda_1\leq\lambda_2\leq\lambda_3$ the eigenvalues of the positive definite symmetric matrix $\M^T\M$, which satisfy $\lambda_1\lambda_2\lambda_3=1$, and by $\hat\e_i$ the corresponding orthonormal eigenvectors, so that $\M^T\M$ has spectral decomposition
\be
\label{4d}
\M^T\M=\lambda_1\hat\e_1\otimes\hat\e_1+\lambda_2\hat\e_2\otimes\hat\e_2+\lambda_3\hat\e_3\otimes\hat\e_3.
\ee
  A necessary and sufficient condition for \eqref{4a} to hold  (see e.g. \cite[Prop. 4]{j32}, \cite[Theorem 2.1]{j56}, \cite{Khachaturyan83}, \cite[Lemma 1]{p37}) is that $\lambda_2=1$, or  (since $\M^T\M\neq\1$) that $\M^T\M$ has an eigenvalue equal to one. Then $0<\lambda_1<1=\lambda_2<\lambda_3=\lambda_1^{-1}$. An equivalent condition   is that 
\be
\label{4b}
|\M|^2=|\M^{-1}|^2,
\ee
which follows from the identity
\be
\label{4c}
\det(\M^T\M-\1)= (\det \M)^2-(\det\M)^2|\M^{-1}|^2+|\M|^2-1.
\ee
(This implies in particular, as observed in \cite[Remark 8.4]{pitterizanzotto03},
 that there is a rank-one connection  to $\SOthree \M$ iff there is a rank-one connection to $\SOthree\M^{-1}$, which can be verified directly.)
 
  There are then exactly two distinct such {\it conjugate}  (or {\it reciprocal}) rank-one connections
\begin{eqnarray*}
\1+\av_+\otimes\n_+&=&\Rv_+\M,\\
\1+\av_-\otimes\n_-&=&\Rv_-\M.
\end{eqnarray*}
These can be calculated explicitly by comparing the spectral decomposition \eqref{4d} with the relation
\be
\label{4d1}
\M^T\M=(\1+\n\otimes\av)(\1+\av\otimes\n)
\ee
that follows from \eqref{4a} (see e.g.  \cite[Prop. 4]{j32}).
A straightforward calculation then shows that (up to a change of sign)
\be
\label{4e}
\n_\pm=\frac{1}{\sqrt{1+\lambda_1}}(\pm\sqrt{\lambda_1}\hat\e_1+\hat\e_3),\;\;\av_\pm=\frac{1-\lambda_1}{\sqrt{1+\lambda_1}}\left(\mp\frac{1}{\sqrt{\lambda_1}}\hat\e_1+\hat\e_3\right).
\ee
From \eqref{4a1} or \eqref{4e} we have that $|\av_+|=|\av_-|$. Also $\displaystyle\n_+\cdot\n_-=\frac{1-\lambda_1}{1+\lambda_1}$, so that the planes corresponding to the two rank-one connections are never orthogonal. 

It also follows from \eqref{4e} that if we know one rank-one connection $\1+\av\otimes \n=\Rv \M$ then the conjugate rank-one connection $\1+\tilde\av\otimes\tilde\n=\tilde\Rv\M$ is given by
\be
\label{4e1}
\tilde \av\otimes\tilde\n=\frac{1}{4+|\av|^2}(2\n-\av)\otimes(2\av+|\av|^2\n).
\ee
From \eqref{4e1} we have that
\be
\label{4e1a}
\tilde \av\otimes\tilde \av=(\tilde \av\otimes \tilde \n)(\tilde \n\otimes\tilde\av)=\frac{|\av|^2}{4+|\av|^2}(2\n-\av)\otimes (2\n-\av), 
\ee
from which, recalling from \eqref{adotn},\eqref{4a1} that $\av\cdot\n=0$ and $|\tilde\av|^2=|\av|^2$, we deduce that
\be
\label{4e1b}
2\n\otimes\n-\av\otimes\n=2\frac{\tilde\av}{|\tilde\av|}\otimes\frac{\tilde\av}{|\tilde\av|}+\tilde\av\otimes\tilde\n.
\ee
Hence we obtain the relations
\be
\label{4e2}
(-\1+2\n\otimes\n)(\1+\av\otimes\n)&=&\left(-\1+2\frac{\tilde\av}{|\tilde\av|}\otimes\frac{\tilde\av}{|\tilde\av|}\right)(\1+\tilde\av\otimes\tilde\n)\\
(-\1+2\tilde\n\otimes\tilde\n)(\1+\tilde\av\otimes\tilde\n)&=&\left(-\1+2\frac{\av}{|\av|}\otimes\frac{\av}{|\av|}\right)(\1+\av\otimes\n),\label{4e3}
\ee
where \eqref{4e3} is obtained from \eqref{4e2} by interchanging  $\av,\n$ and $\tilde\av,\tilde\n$. Thus $\tilde\Rv \Rv^T$ can be expressed as the product of two $180^\circ$ rotations in the two ways
\be
\label{4e4}
\tilde\Rv\Rv^T&=&(-\1+2\tilde\n\otimes\tilde\n)\left(-\1+2\frac{\av}{|\av|}\otimes\frac{\av}{|\av|}\right)\\
&=&\left(-\1+2\frac{\tilde\av}{|\tilde\av|}\otimes\frac{\tilde\av}{|\tilde\av|}\right)(-\1+2\n\otimes\n).
\nonumber
\ee
We will see that there are always $\muv$ such that $\M^T\M$ has an eigenvalue one. However this is not true for general $\muv$.

\subsection{Correspondence between rank-one connections for Bravais lattices and their duals}
It turns out that the rank-one connections for the dual of a Bravais lattice  can be obtained explicitly in terms of the rank-one connections for the original lattice (and vice versa). On account of Remark \ref{bccfccdual} this implies that  the rank-one connections for fcc can be obtained explicitly in terms of those for bcc, which are a bit easier to calculate due to the more symmetric form of the matrix $\B_{\rm bcc}$.

Note that the energy wells for the dual lattice $\mL(\B^{-T})$ of the Bravais lattice $\mL(\B)$ are given by $\SOthree \B^{-T}\muv^T\B^T$ for $\muv\in GL^+(3,\Zed)$.
\begin{theorem}
\label{bccfcc}The rank-one connection
\be
\label{bccfcc1}
\1 +\av\otimes\n=\Rv \B\muv\B^{-1},
\ee
for $\Rv\in {\rm SO}(3)$ and $\muv\in GL^+(3,\Zed)$ holds  iff
\be 
\label{bccfcc2}
\1+\Rv^T\n\otimes\Rv^T\av=\Rv^T\B^{-T}\muv^T\B^T.
\ee
Furthermore, if $\1+\tilde\av\otimes \tilde\n$ is the conjugate of $\1+\av\otimes\n$ then $\1+\Rv^T\tilde\n\otimes \Rv^T\tilde\av$ is the conjugate of $\1+\Rv^T\n\otimes\Rv^T\av$. 
\end{theorem}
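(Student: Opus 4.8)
The plan is to obtain \eqref{bccfcc2} from \eqref{bccfcc1} by a transpose followed by conjugation with $\Rv$, and then to settle the statement about conjugates by applying this very equivalence to the conjugate connection and counting how many rank-one connections a single well can carry.

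First I would write $\M=\B\muv\B^{-1}$, so that $\M^T=\B^{-T}\muv^T\B^T$ is of the form governing the dual lattice $\mL(\B^{-T})$ (recall $\muv^T\in GL^+(3,\Zed)$). Transposing \eqref{bccfcc1} gives $\1+\n\otimes\av=\M^T\Rv^T$; right-multiplying by $\Rv$ yields $(\1+\n\otimes\av)\Rv=\M^T$, and then left-multiplying by $\Rv^T$, using $\Rv^T(\n\otimes\av)\Rv=(\Rv^T\n)\otimes(\Rv^T\av)$ together with $\Rv^T\1\Rv=\1$, turns the left-hand side into $\1+\Rv^T\n\otimes\Rv^T\av$ and the right-hand side into $\Rv^T\M^T$, which is exactly \eqref{bccfcc2}. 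Since transposition and conjugation by the invertible $\Rv$ are reversible, every step is an equivalence and both implications follow. I would add the remark that \eqref{adotn} gives $\av\cdot\n=0$, hence $(\Rv^T\n)\cdot(\Rv^T\av)=0$, so that \eqref{bccfcc2} is a genuine rank-one connection for the dual lattice and not a mere formal identity.

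For the conjugate statement I would let the conjugate of \eqref{bccfcc1} be $\1+\tilde\av\otimes\tilde\n=\tilde\Rv\M$, with $\tilde\Rv\in\SOthree$ the rotation of the second connection to the well $\SOthree\M$; since the two connections are distinct, $\Rv\M\neq\tilde\Rv\M$ and so $\tilde\Rv\neq\Rv$. Applying the equivalence just proved to this connection, that is replacing $(\av,\n,\Rv)$ by $(\tilde\av,\tilde\n,\tilde\Rv)$, produces $\1+\tilde\Rv^T\tilde\n\otimes\tilde\Rv^T\tilde\av=\tilde\Rv^T\M^T$. Thus \eqref{bccfcc2} and this relation display two rank-one connections, $\Rv^T\M^T$ and $\tilde\Rv^T\M^T$, to the same dual well $\SOthree\M^T$, and they are distinct because $\M^T$ is invertible and $\Rv\neq\tilde\Rv$. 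Since a well admitting any rank-one connection admits exactly two, forming a conjugate pair (the two solutions of \eqref{4e}), and $\SOthree\M^T$ does admit one by \eqref{bccfcc2}, these two must be precisely that pair, so each is the conjugate of the other. This gives the assertion, the rotation attached to the conjugate connection being $\tilde\Rv$.

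The first part is routine; the one point demanding care is the rotation bookkeeping in the second. The equivalence must be applied to the conjugate connection with \emph{its own} rotation $\tilde\Rv$, so that its dual is the \emph{other} connection of $\SOthree\M^T$ and not a connection to some unrelated well; the real content is simply that the first part sets up a bijection between the two-element conjugate pair of $\SOthree\M$ and that of $\SOthree\M^T$, and a bijection of two-element sets carries the partner to the partner. One could instead grind the identity out from the explicit formula \eqref{4e1}, but that route is much heavier and obscures this structural reason.
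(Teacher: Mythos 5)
Your proof of the equivalence of \eqref{bccfcc1} and \eqref{bccfcc2} is exactly the paper's one-line argument (transpose, then conjugate by $\Rv$; every step reversible), so there is nothing to compare there.

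For the conjugacy statement your route is genuinely different from the paper's, and the difference matters. The paper argues by computing that $(\1+\Rv^T\n\otimes\Rv^T\av)^{-1}(\1+\Rv^T\tilde\n\otimes\Rv^T\tilde\av)$ is a rotation different from $\1$; you instead apply the first part to the conjugate connection \emph{with its own rotation} $\tilde\Rv$ and then invoke the fact that a well admits exactly two rank-one connections. Your bookkeeping is the correct one, but you should not then say ``this gives the assertion'': what you have proved is that the conjugate of $\1+\Rv^T\n\otimes\Rv^T\av=\Rv^T\M^T$ is $\1+\tilde\Rv^T\tilde\n\otimes\tilde\Rv^T\tilde\av=\tilde\Rv^T\M^T$, whereas the statement as printed asserts it is $\1+\Rv^T\tilde\n\otimes\Rv^T\tilde\av$, built with the rotation of the \emph{original} connection. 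These are different matrices in general, since $\1+\Rv^T\tilde\n\otimes\Rv^T\tilde\av=\Rv^T\M^T\tilde\Rv^T\Rv$ lies in $\Rv^T\M^T\SOthree$ rather than in the left coset $\SOthree\M^T$; for instance, for simple cubic with $\M=\1+\e_1\otimes\e_2$ and $\Rv=\1$ one has $\tilde\av\otimes\tilde\n=\tfrac15(2\e_2-\e_1)\otimes(2\e_1+\e_2)$, and $(\1+\tilde\n\otimes\tilde\av)(\1+\e_2\otimes\e_1)^{-1}$ is not a rotation, while $\1+\tilde\Rv^T\tilde\n\otimes\tilde\Rv^T\tilde\av=\1+\tfrac15(2\e_1-\e_2)\otimes(\e_1+2\e_2)$ is indeed the conjugate of $\1+\e_2\otimes\e_1$. (The paper's own later use of the theorem, e.g.\ in Theorem \ref{twinthm}(c), maps each member of a conjugate pair with its own rotation, consistent with your version.) This also pinpoints why your ``counting'' step is needed: the paper's verification only shows that the two transformed matrices differ by right multiplication by a nontrivial rotation, which by itself does not place them in the same well $\SOthree\M^T$; your argument — two distinct rank-one connections $\Rv^T\M^T$ and $\tilde\Rv^T\M^T$ to a well that carries exactly two — does. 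So: your proof is correct and in fact tighter than the printed one, but you must state explicitly that the rotation appearing in the transformed conjugate is $\tilde\Rv$, not $\Rv$.
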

\begin{proof}
Taking the transpose of \eqref{bccfcc1} and pre- and post-multiplying by $\Rv^T$ and $\Rv$ respectively, we see that \eqref{bccfcc1} and \eqref{bccfcc2} are equivalent. 

The statement about the conjugates follows from \eqref{4e1}, or more directly by noting that 
\be
\label{bccfcc6}
(\1+\tilde\av\otimes\tilde\n)(\1+\av\otimes\n)^{-1}=\Rv\in {\rm SO}(3),
\ee
where $\Rv\neq\1$, implies that 
\be
\label{bccfcc7}
(\1+\Rv^T\n\otimes\Rv^T\av)^{-1}(\1+\Rv^T\tilde\n\otimes\Rv^T\tilde\av)=\Rv^T\Rv^T\Rv\neq\1,
\ee
as required.
\qed
\end{proof}
From Remark \ref{bccfccdual} we thus obtain
\begin{corollary}
\label{fccbcccor}The rank-one connection
\be
\label{bccfcc7a}
\1+\av\otimes\n=\Rv\B_{\rm fcc}\muv\B_{\rm fcc}^{-1}
\ee
for $\Rv\in \SOthree$ and $\muv\in GL^+(3,\Zed)$ holds iff
\be
\label{bccfcc5}
\1+\Rv^T\n\otimes\Rv^T\av=\Rv^T\B_{\rm bcc}\tilde\muv\B_{\rm bcc}^{-1},
\ee
where $\tilde\muv=\omv\muv^T\omv^{-1}$ and $\omv$ is given by \eqref{omegadef}.
\end{corollary}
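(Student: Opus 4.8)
The plan is to derive the corollary as a direct specialization of Theorem \ref{bccfcc} to $\B=\B_{\rm fcc}$, followed by an elimination of $\B_{\rm fcc}$ in favour of $\B_{\rm bcc}$ using the duality relation recorded in Remark \ref{bccfccdual}. First I would apply Theorem \ref{bccfcc} verbatim with $\B=\B_{\rm fcc}$: the rank-one connection \eqref{bccfcc7a} holds iff
\[
\1+\Rv^T\n\otimes\Rv^T\av=\Rv^T\B_{\rm fcc}^{-T}\muv^T\B_{\rm fcc}^T.
\]
Everything then reduces to showing that the inner matrix $\B_{\rm fcc}^{-T}\muv^T\B_{\rm fcc}^T$ coincides with $\B_{\rm bcc}\tilde\muv\B_{\rm bcc}^{-1}$ for the choice $\tilde\muv=\omv\muv^T\omv^{-1}$ appearing in the statement.

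For this I would extract two identities from Remark \ref{bccfccdual}. The relation $\B_{\rm fcc}^{-T}=2\B_{\rm bcc}\omv$ is given directly; inverting it, and using $\B_{\rm fcc}^{-T}=(\B_{\rm fcc}^T)^{-1}$, yields $\B_{\rm fcc}^T=\frac12\,\omv^{-1}\B_{\rm bcc}^{-1}$. Substituting both into the product and cancelling the scalar factors $2$ and $\frac12$ gives
\[
\B_{\rm fcc}^{-T}\muv^T\B_{\rm fcc}^T=(2\B_{\rm bcc}\omv)\,\muv^T\,\big(\tfrac12\omv^{-1}\B_{\rm bcc}^{-1}\big)=\B_{\rm bcc}\big(\omv\muv^T\omv^{-1}\big)\B_{\rm bcc}^{-1},
\]
so that setting $\tilde\muv=\omv\muv^T\omv^{-1}$ turns the right-hand side of the displayed iff into $\Rv^T\B_{\rm bcc}\tilde\muv\B_{\rm bcc}^{-1}$, which is exactly \eqref{bccfcc5}.

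The one point requiring a little care — and the only place the argument could conceivably stumble — is checking that $\tilde\muv$ genuinely lies in $GL^+(3,\Zed)$, so that \eqref{bccfcc5} is a rank-one connection of the admissible form and Theorem \ref{bccfcc} is being applied to the dual lattice legitimately. Here I would use that $\omv\in GL^+(3,\Zed)$ from \eqref{omegadef}, so $\det\omv=1$ and hence $\omv^{-1}\in\Zed^{3\times3}$ by Cramer's rule; since $\muv^T\in GL^+(3,\Zed)$ as well, the triple product $\omv\muv^T\omv^{-1}$ is an integer matrix of determinant $\det\muv=1$. Beyond this bookkeeping there is no genuine obstacle: the result is a clean substitution whose entire content is carried by Theorem \ref{bccfcc} together with the dual-lattice identity $\B_{\rm fcc}^{-T}=2\B_{\rm bcc}\omv$.
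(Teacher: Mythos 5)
Your proof is correct and follows exactly the route the paper intends: the corollary is stated with no explicit proof beyond the phrase ``From Remark \ref{bccfccdual} we thus obtain'', and the intended argument is precisely your substitution of $\B=\B_{\rm fcc}$ into Theorem \ref{bccfcc} together with $\B_{\rm fcc}^{-T}=2\B_{\rm bcc}\omv$. Your additional check that $\tilde\muv=\omv\muv^T\omv^{-1}\in GL^+(3,\Zed)$ is a worthwhile piece of bookkeeping the paper leaves tacit.
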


\section{Slip}
\label{slip}
\subsection{Slip systems and lattice invariant shears}
A {\it slip system} $(\Pi(\n),\bv)$ consists of a lattice plane $\Pi(\n)$ with unit normal $\n$ and interplane spacing $d$, and a nonzero lattice vector $\bv\in{\mathcal L}(\B)$ (the {\it Burgers vector}) with $\bv\cdot\n=0$. If $(\Pi(\n),\bv)$ is a slip system so is $(\Pi(\n),-\bv)$, which corresponds to an opposite direction of shear on the same lattice plane, and when counting slip systems we identify  $(\Pi(\n),\bv)$ with $(\Pi(\n),-\bv)$.

Equivalently
$${\mathcal L}(\B)=({\mathcal L}(\B)\cap \{\x\cdot\n\leq 0\})\cup(({\mathcal L}(\B)\cap\{\x\cdot\n>0\})+\bv),$$
so that if the part of ${\mathcal L}(\B)$ in the half-space $\{\x\cdot\n>0\}$ is rigidly displaced by $\bv$ then ${\mathcal L}(\B)$ is restored.

\begin{theorem} 
\label{slipchar}
The following are equivalent:\\
$\rm(i)$ $\1+\av\otimes\n=\B\muv\B^{-1},$ where $\av,\n\in\R^3$, $|\n|=1$,  and $\muv\in GL^+(3,\Zed)$.\\
$\rm(ii)$ $\muv=\1+\p\otimes\q$, where $\p,\q\in \Zed^3\setminus\{\bzero\}$, $\p\cdot\q=0$, and
\be
\label{z1}
\n=\tau\frac{\B^{-T}\q}{|\B^{-T}\q|},\;\;\av=\tau\B\p|\B^{-T}\q|,
\ee
where $\tau=\pm 1$.\\
$\rm(iii)$ 
 $(\Pi(\n),\bv)$ is a slip system with interplane spacing $d$  and Burgers vector $\bv=d\av$.
\end{theorem}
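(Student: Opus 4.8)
The plan is to prove the cycle of implications (i) $\Rightarrow$ (ii) $\Rightarrow$ (iii) $\Rightarrow$ (i), with (i) $\Leftrightarrow$ (ii) forming the algebraic core and (ii) $\Leftrightarrow$ (iii) the geometric interpretation through the dual lattice. The starting point for (i) $\Rightarrow$ (ii) is to conjugate the rank-one relation by $\B$, which turns it into a rank-one perturbation of the identity: using $\B^{-1}(\av\otimes\n)\B=(\B^{-1}\av)\otimes(\B^T\n)$ we obtain $\muv=\1+(\B^{-1}\av)\otimes(\B^T\n)$.

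I expect the integrality to be the main obstacle. The matrix $\muv-\1=(\B^{-1}\av)\otimes(\B^T\n)$ has integer entries, but the real vectors $\B^{-1}\av$ and $\B^T\n$ need not individually be integral. Since $\av\neq\bzero$, some column of $\muv-\1$ is a nonzero integer vector and a scalar multiple of $\B^{-1}\av$, so $\B^{-1}\av$ is parallel to an integer vector; let $\p\in\Zed^3$ be primitive (greatest common divisor of its entries equal to one) and parallel to $\B^{-1}\av$, and write $\B^{-1}\av=s\p$, $\q:=s\B^T\n$, so that $\muv-\1=\p\otimes\q$. Integrality of the entries $p_iq_j$ together with a Bezout argument using $\gcd(p_1,p_2,p_3)=1$ then forces $\q\in\Zed^3\setminus\{\bzero\}$. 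The determinant identity $\det(\1+\p\otimes\q)=1+\p\cdot\q$ combined with $\det\muv=1$ gives $\p\cdot\q=0$, and unwinding the relations $\B^{-1}\av\parallel\p$, $\B^T\n\parallel\q$ under the normalization $|\n|=1$ reproduces the stated formulas, with the sign ambiguity absorbed into $\tau$. The converse (ii) $\Rightarrow$ (i) is then the direct computation $\B(\1+\p\otimes\q)\B^{-1}=\1+(\B\p)\otimes(\B^{-T}\q)$, the same determinant identity placing $\muv$ in $GL^+(3,\Zed)$.

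For (ii) $\Rightarrow$ (iii) I would read the geometry off the formulas. As $\q\in\Zed^3\setminus\{\bzero\}$, the vector $\B^{-T}\q$ is a nonzero element of the dual lattice $\mL(\B^{-T})$ parallel to $\n$, so $\Pi(\n)$ is a lattice plane by Theorem \ref{lplane}, and Remark \ref{rem1} gives $d=\gcd(q_1,q_2,q_3)/|\B^{-T}\q|$. Substituting into $\bv=d\av$ yields $\bv=\tau\,\gcd(q_1,q_2,q_3)\,\B\p$, a nonzero element of $\mL(\B)$, while $\bv\cdot\n$ is proportional to $(\B\p)\cdot(\B^{-T}\q)=\p\cdot\q=0$; hence $(\Pi(\n),\bv)$ is a slip system. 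I would also note that the formulas for $\n$ and $\av$ are invariant under the rescaling $\p\mapsto c\p$, $\q\mapsto\q/c$ that preserves $\muv$, so the quantities involved are well defined.

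Finally, for (iii) $\Rightarrow$ (i), I would recover the algebra from a given slip system. Theorem \ref{lplane} supplies a primitive $\q\in\Zed^3$ with $\n=\tau\B^{-T}\q/|\B^{-T}\q|$ and $d^{-1}=|\B^{-T}\q|$; writing $\bv=\B\rv$ with $\rv\in\Zed^3\setminus\{\bzero\}$ and using $\av=\bv/d$ gives $\av=|\B^{-T}\q|\,\B\rv$, so that $\p:=\tau\rv\in\Zed^3\setminus\{\bzero\}$ matches the formula for $\av$, while $\bv\cdot\n=0$ gives $\p\cdot\q=0$. Then $\muv:=\1+\p\otimes\q\in GL^+(3,\Zed)$ and $\B\muv\B^{-1}=\1+(\B\p)\otimes(\B^{-T}\q)=\1+\av\otimes\n$, which is (i). It is worth remarking that the restoration property in the definition of a slip system is automatic once $\bv\in\mL(\B)$ satisfies $\bv\cdot\n=0$, since translation by such a $\bv$ is a bijection of the open upper half-lattice onto itself. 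The only genuinely delicate point is the integrality step in (i) $\Rightarrow$ (ii); the rest is bookkeeping with the conjugation identity, the determinant identity, and the dual-lattice results already established.
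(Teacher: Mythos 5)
Your proposal is correct and follows essentially the same route as the paper: conjugate by $\B$ to write $\muv-\1$ as an integer rank-one matrix, extract an integer factorization $\p\otimes\q$, use $\det\muv=1$ to get $\p\cdot\q=0$, and translate to slip systems via Theorem \ref{lplane} and Remark \ref{rem1}. The only (immaterial) difference is that you establish the integer factorization inline by choosing $\p$ primitive and applying B\'ezout, whereas the paper isolates this as Lemma \ref{lemtensor} and proves it by a lowest-terms/prime-factor argument.
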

The proof uses the following lemma.
\begin{lemma}
\label{lemtensor}The  matrix
$\cv\otimes\dv\in \Zed^{3\times 3}$ for some $\cv,\dv\in\R^3$ if and only if $\cv\otimes\dv=\p\otimes\q$ for some $\p,\q\in\Zed^3$.
\end{lemma}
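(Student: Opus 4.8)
The plan is to dispose of the easy direction first and then concentrate on the content, which is the converse. If $\cv\otimes\dv=\p\otimes\q$ with $\p,\q\in\Zed^3$, then the entries are $p_iq_j\in\Zed$, so $\cv\otimes\dv\in\Zed^{3\times 3}$, and nothing more is needed.

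For the converse I would first clear away the degenerate case: if either $\cv=\bzero$ or $\dv=\bzero$ the matrix is the zero matrix, which equals $\bzero\otimes\bzero$, so I may assume both are nonzero. The geometric observation driving the argument is that the columns of $M:=\cv\otimes\dv$ (with $M_{ij}=c_id_j$) are exactly the vectors $d_j\cv$, and they all lie on the single line $\R\cv$ through the origin. By hypothesis each $d_j\cv$ is an integer vector, and since some $d_j\neq 0$ this line contains a nonzero lattice point; hence $\R\cv$ is a rational line and I can pick a primitive integer vector $\p$ generating it. Concretely, take any nonzero integer vector $\mathbf{v}$ on the line (e.g. $\mathbf{v}=d_{j_0}\cv$ for some $d_{j_0}\neq 0$), let $g=\gcd$ of its components, and set $\p=\mathbf{v}/g$, which is an integer vector whose components have no common factor.

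The key step is then to show that every integer vector lying on the line $\R\p$ is an integer multiple of $\p$. Granting this, each column $d_j\cv$ of $M$, being an integer vector on $\R\p$, equals $q_j\p$ for some $q_j\in\Zed$, and therefore $M=\p\otimes\q$ with $\q=(q_1,q_2,q_3)\in\Zed^3$, as required. The main obstacle --- really the only nontrivial point --- is proving this primitivity fact. I would argue as follows: if $t\p$ is an integer vector for some $t\in\R$, choose a component $p_k\neq 0$, so that $t=w_k/p_k\in\mathbb{Q}$; writing $t=m/n$ in lowest terms with $n\geq 1$, the integrality of $(m/n)p_i$ forces $n\mid mp_i$ for every $i$, and $\gcd(m,n)=1$ gives $n\mid p_i$ for every $i$. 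Since the components of $\p$ have greatest common divisor $1$, this yields $n=1$ and hence $t\in\Zed$. This elementary gcd computation is where all the arithmetic lives; the remaining assembly into $\p\otimes\q$ is bookkeeping.
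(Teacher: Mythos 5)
Your proof is correct. It takes a recognizably different route from the paper's, though the arithmetic engine is the same. The paper writes $\dv=\frac{1}{c_i}\m$ and $\cv=\frac{1}{d_j}\rv$ with $\m,\rv\in\Zed^3$, factors the matrix as $\frac{r}{s}\,\rv'\otimes\m'$ with $\rv',\m'$ primitive and $r/s$ in lowest terms, and then rules out $s\neq\pm1$ by letting a prime factor of $s$ force a common divisor of the components of $\m'$. You instead work column by column: the columns $d_j\cv$ are integer vectors on the rational line $\R\cv$, you choose a primitive integer generator $\p$ of that line, and you invoke the fact that any integer vector on $\R\p$ is an integer multiple of $\p$ (proved by your $t=m/n$ gcd argument, which is essentially the same computation the paper uses in Remark \ref{rem1} for dual lattice vectors). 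What your version buys is a slightly more geometric organization --- you never need to treat $\cv$ and $\dv$ symmetrically or introduce the combined scalar $\frac{r}{s}$, and the primitivity lemma you isolate is reusable; what the paper's version buys is a single self-contained factorization of the whole matrix with one prime-divisibility contradiction at the end. Both are complete and elementary.
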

\begin{proof}
We only have to prove the necessity, and  may assume $\cv\otimes\dv$ is a nonzero matrix of integers. Thus some $c_i\neq 0$ and $c_id_1, c_id_2, c_id_3$ are integers, so that $\displaystyle\dv=\frac{1}{c_i}\m$ for some $\m\in\Zed^3$. Similarly $\displaystyle\cv=\frac{1}{d_j}\rv$ for some $j$ and $\rv\in\Zed^3$. Hence $\displaystyle\cv\otimes \dv=\frac{1}{c_id_j}\rv\otimes\m$. We can write $\rv=k\rv', \m=l\m'$, where $k,l$ are integers and neither $(r_1',r_2',r_3')$ nor $(m_1',m_2',m_3')$ have common factors. Thus  
$$\cv\otimes \dv=\frac{r}{s}\rv'\otimes \m',$$
where $\displaystyle\frac{r}{s}$ is a rational number expressed in its lowest terms. Suppose $s\neq\pm 1$, so that $s$ has a prime factor $S$. For some $r_i'$, $S$ does not divide $r_i'$. Hence $S$ divides $m_j'$ for all $j$, a contradiction. Hence $s=\pm 1$, giving the result.\qed
\end{proof}
\noindent {\it Proof of Theorem {\rm\ref{slipchar}}}\\
\noindent(i)$\Rightarrow$(ii). (i) implies that $\1+\B^{-1}\av\otimes\B^{T}\n=\muv$. Hence, by Lemma \ref{lemtensor}, $\av\otimes\n=\B\p\otimes\B^{-T}\q$ for $\p,\q\in\Zed^3$, and $\av\cdot\n=\p\cdot\q=0$, giving (ii).\vspace{.1in}

 \noindent(ii)$\Rightarrow$(iii).  We can suppose that the $q_i$ have no common factor, so that by Remark \ref{rem1},
$\displaystyle\n=\tau\frac{\B^{-T}\q}{|\B^{-T}\q|}$ is the normal to a lattice plane $\Pi(\n)$ with interplane spacing $\displaystyle d=\frac{1}{|\B^{-T}\q|}$. Also $\bv=\tau\B\p\in\mL(\B)$ with $\bv\cdot\n=0$, and so $\bv=d\av$, giving (iii). \vspace{.1in}

\noindent(iii)$\Rightarrow$(i). By Theorem \ref{lplane} and Remark \ref{rem1} we know that $\displaystyle\n=\frac{\B^{-T}\q}{|\B^{-T}\q|}, d=\frac{1}{|\B^{-T}\q|}$, for some $\q\in\Zed^3\setminus\{\bzero\}$. Also $\bv=\B\p$ for some $\p\in\Zed^3\setminus\{\bzero\}$, where $\bv\cdot\n=d\B\p\cdot\B^{-T}\q=d\p\cdot\q=0$. Then
$$\1+d^{-1}\bv\otimes\n=\1+\B\p\otimes\B^{-T}\q=\B(\1+\p\otimes\q)\B^{-1},$$
giving (i).\qed

\noindent Thus a slip system corresponds to a {\it lattice invariant shear} giving a rank-one connection between $\SOthree$ and $\SOthree\B\muv\B^{-1}$ without lattice rotation. The shear can correspond to a shear band, or in the case of rigid slip to a `microshear' over just one pair of adjacent lattice planes. Note that Theorem \ref{slipchar} shows that the slip systems for any Bravais lattice are in 1-1 correspondence with those for the simple cubic lattice (having deformation gradient $\1+\p\otimes\q$). For the use of such rank-one connections in continuum theories of plasticity see e.g. Ortiz \& Repetto \cite{ortizrepetto1999}.

\subsection{Slips of minimum shear magnitude}
\label{slipsmin}
Since if $\muv=\1+\p\otimes\q$ ,
$$\M=\B\muv\B^{-1}=\1+\B\p\otimes\B^{-T}\q$$
we have that 
$|\av|^2=d^{-2}|\bv|^2=|\B\p|^2|\B^{-T}\q|^2.$
Hence, to minimize the shear magnitude $|\av|$ we need to minimize $|\B\p|^2|\B^{-T}\q|^2$ subject to $\p,\q\in{\mathbb Z}^3\setminus \{0\}$ with $\p\cdot\q=0$.

 Minimizing the shear magnitude is the criterion proposed by Chalmers \& Martius \cite{chalmers1952slip}, who give a heuristic justification in terms of energetics,  pointing out that the criterion of minimizing the magnitude of the Burgers vector $\bv$  does not determine the preferred slip planes. A review of different possible criteria is given in \cite{cordier2002}. 

 In the following theorem we give the slips of minimum shear magnitude  for the simple cubic, bcc and fcc lattices, reproducing in our framework classical results for the slip systems for these lattices (see, for example, \cite{jackson2012handbook}). See below for further comments.

\begin{theorem}
\label{slipthm}
The slips $\1+\av\otimes \n$ minimizing $|\av|$ are given by:\vspace{.05in}\\
$\rm(a)$ Simple cubic: $\av\otimes\n=\pm\e_i\otimes\e_j$, where $i\neq j$, for which $|\av|^2= 1$.\vspace{.05in}\\
$\rm(b)$ bcc: \vspace{-.32in}
\begin{align*}\av\otimes\n&=\half(\e_1+\e_2+\e_3)\otimes(\e_j-\e_k),\\
\av\otimes\n&=\half(\e_j+\e_k-\e_i)\otimes (\e_j-\e_k),\\
\av\otimes\n&=\pm\half(\e_j+\e_k-\e_i)\otimes(\e_i+\e_k),
\end{align*}
where $i,j,k\in\{1,2,3\}$ are distinct, for which $|\av|^2=\frac{3}{2}$.\vspace{.05in}\\
$\rm(c)$ fcc: \vspace{-.32in}
\begin{align*}\av\otimes\n&=\half (\e_j-\e_k)\otimes(\e_1+\e_2+\e_3),\\  \av\otimes\n&=\half(\e_j-\e_k)\otimes(\e_j+\e_k-\e_i),  \\\av\otimes\n&=\pm\half(\e_i+\e_k)\otimes(\e_j+\e_k-\e_i),
\end{align*}
 where $i,j,k\in\{1,2,3\}$ are distinct, for which $|\av|^2=\frac{3}{2}$.
\end{theorem}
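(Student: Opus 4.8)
The starting point is the reduction already in hand: with $\muv=\1+\p\otimes\q$, $\p,\q\in\Zed^3\setminus\{\bzero\}$, $\p\cdot\q=0$, one has $\av\otimes\n=\B\p\otimes\B^{-T}\q$ and $|\av|^2=G(\p)H(\q)$, where $G(\p):=|\B\p|^2$ and $H(\q):=|\B^{-T}\q|^2$ are positive definite quadratic forms. So the task is to minimize $G(\p)H(\q)$ over nonzero integer $\p,\q$ with $\p\cdot\q=0$ and to list the minimizers. The plan rests on one observation. Writing $G_{\min}$ and $H_{\min}$ for the unconstrained minima over $\Zed^3\setminus\{\bzero\}$ (the squared lengths of the shortest vectors of $\mL(\B)$ and of the dual lattice $\mL(\B^{-T})$), positivity gives $G(\p)H(\q)\geq G_{\min}H_{\min}$ for every feasible pair, with equality forcing $G(\p)=G_{\min}$ and $H(\q)=H_{\min}$. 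Hence, as soon as one exhibits a shortest vector of $\mL(\B)$ orthogonal to a shortest vector of $\mL(\B^{-T})$, the constrained minimum equals $G_{\min}H_{\min}$, and the minimizing slips are exactly the orthogonal pairs of shortest vectors. Note that orthogonality is used only for attainment, not for the lower bound.

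Next I would compute the forms and their shortest vectors. For simple cubic ($\B=\1$), $G(\p)=|\p|^2$ and $H(\q)=|\q|^2$, so $G_{\min}=H_{\min}=1$ with minimizers $\pm\e_i$, and $\p=\e_i\perp\q=\e_j$ gives $|\av|^2=1$ and the family $\pm\e_i\otimes\e_j$. For bcc a short computation gives $4G(\p)=\sum_i p_i^2+\sum_{i<j}(p_i-p_j)^2$ and $H(\q)=\sum_{i<j}(q_i+q_j)^2$; for fcc, $4G(\p)=(p_1-p_2)^2+(p_1+p_2+p_3)^2+p_3^2$ and $H(\q)=(q_1-q_2)^2+(q_1+q_2)^2+(2q_3-q_1-q_2)^2$. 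Each is a sum of squares of integer linear forms, so I would determine $G_{\min}$, $H_{\min}$ and all minimizers by asking how many summands can vanish simultaneously. For the bcc form $G$ this is a direct case check (all $p_i$ equal, exactly two equal, or all distinct), giving $G_{\min}=\tfrac34$ with minimizers $\pm\e_i$ and $\pm(\e_1+\e_2+\e_3)$. For the other three forms the sum-of-squares count alone is not sharp, and a parity constraint on the defining linear forms (an even total, or equal parity among them) excludes the small values; this yields $H_{\min}=2$ for bcc, and $G_{\min}=\tfrac12$, $H_{\min}=3$ for fcc. Hence $|\av|^2=G_{\min}H_{\min}=\tfrac32$ for both bcc and fcc.

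Finally I would realize the minimum and enumerate. For bcc, $\p=\e_1$ and $\p=\e_1+\e_2+\e_3$ are shortest lattice vectors and $\q=\e_2-\e_3$, $\q=\e_2$ shortest dual vectors orthogonal to them; substituting into $\av\otimes\n=\B\p\otimes\B^{-T}\q$ and factoring conveniently gives representatives, and acting by the point group $P^{48}$ (equivalently letting $i,j,k$ run over distinct indices with signs) generates the full list, which I would check matches the three stated families. For fcc the cleanest route is a slip--duality: applying Theorem \ref{slipchar}(i) to both $\mL(\B)$ and its dual $\mL(\B^{-T})$ (equivalently Theorem \ref{bccfcc} with $\Rv=\1$) shows that slips of a lattice and of its dual correspond by interchanging $\av$ and $\n$ at equal shear magnitude; since the dual of bcc is an fcc lattice and rank-one connections are scale invariant, the minimizing fcc slips are precisely the transposes of the bcc ones, which is exactly why families (c) are obtained from families (b) by swapping the two tensor factors.

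The main obstacle is the complete enumeration of shortest vectors of each form, and the bookkeeping of which orthogonal pairings actually occur and how they organize into the stated families under the point group. The one genuinely non-obvious step is the parity constraint on the three remaining forms: without it the sum-of-squares lower bound is not tight, and neither the values $H_{\min}=2,3$ nor the conclusion $|\av|^2=\tfrac32$ would follow.
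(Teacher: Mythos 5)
Your proposal is correct and follows essentially the same route as the paper: reduce to minimizing $|\B\p|^2|\B^{-T}\q|^2$ over nonzero integer vectors with $\p\cdot\q=0$, minimize each quadratic form separately (the paper states the minima and minimizers of its $G$ and $H$ directly, where your parity observations supply the justification), note that orthogonal pairs of minimizers exist so the constrained minimum equals the product of the unconstrained ones, enumerate, and obtain fcc from bcc by the duality of Corollary \ref{fccbcccor}. The only differences are cosmetic (your sum-of-squares rewritings of the forms versus the paper's linear forms $p_j+p_k-p_i$ and $q_i+q_j$).
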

\begin{remark}
\label{Prem}
Without loss of generality we may take for simple cubic $\av\otimes\n=\e_1\otimes\e_2$,
for bcc $\av\otimes\n=\half(\e_1+\e_2+\e_3)\otimes(\e_1-\e_2)$, and for fcc $\av\otimes\n=\half(\e_1-\e_2)\otimes(\e_1+\e_2+\e_3)$, since all the other cases are given by $\Pv\av\otimes \n\Pv^{-1}$ for suitable $\Pv\in P^{24}$, where $P^{24}$ is defined in \eqref{P24}.
\end{remark}
\begin{proof}
(a) Since $\B_{\rm c}=\1$ we have $|\av|^2=|\p|^2|\q|^2\geq 1$, and $|\av|^2=1$ for $|\p|^2=|\q|^2=1$, giving the result.\\

\noindent(b) For bcc 
$$\B_{\rm bcc}=\half\left(\begin{array}{rrr}-1&1&1\\ 1&-1&1\\ 1&1&-1\end{array}\right),\;\; \B_{\rm bcc}^{-1}=\left(\begin{array}{ccc}0\,\,&1\,\,&1\\ 1\,\,&0\,\,&1\\ 1\,\,&1\,\,&0\end{array}\right),$$
 and we have to minimize $\frac{1}{4}G(\p)H(\q)$,  subject to $\p,\q\in{\mathbb Z}^3\setminus \{0\}$ and $\p\cdot\q=0$, where
\begin{eqnarray*}&&G(\p)=(p_2+p_3-p_1)^2+(p_3+p_1-p_2)^2+(p_1+p_2-p_3)^2,\\ &&H(\q)=(q_2+q_3)^2+(q_3+q_1)^2+(q_1+q_2)^2.
\end{eqnarray*}
The minimum value of $H(\q)$ is 2, with minimizers $\q=\pm\e_i$ and $\q=\e_i-\e_j, i\neq j$, while the minimum value of $G(\p)$ is 3, with minimizers $\p=\pm\e_i$ and $\p=\pm(\e_1+\e_2+\e_3)$. Hence the minimizers of $\frac{1}{4}G(\p)H(\q)$,  subject to $\p,\q\in{\mathbb Z}^3\setminus \{0\}$ and $\p\cdot\q=0$ are given by 
\be\nonumber
\pm\p\otimes\q\in\{\e_i\otimes\e_j, \e_i\otimes(\e_j-\e_k), (\e_1+\e_2+\e_3)\otimes(\e_i-\e_j), i,j,k \text{ distinct}\}.
\ee
Calculating $\displaystyle\B_{\rm bcc}\p\otimes\B_{\rm bcc}^{-T}\q$ for these  possibilities gives the slip systems in the theorem.\\
(c) For fcc   it suffices to note that by Corollary \ref{fccbcccor} the slips  minimizing $|\av|$ are given by $\1+\n\otimes\av$, where $\1+\av\otimes\n$ are the slips for bcc minimizing $|\av|$.\qed
\end{proof}
 
For bcc metals the experimental determination of slip planes is not straightforward. One can ask what the slips are that give the next lowest value of $|\av|^2$, and whether such slips are observed. Since $H(\q)$ is even, the next lowest value of $G(\p)H(\q)$ is greater than or equal to 8, for which $|\av|^2=2$, and the value 8 is achieved with $G(\p)=4, H(\q)=2$ and
$$\p\otimes\q\in \{\pm(\e_i+\e_j)\otimes\e_k, (\e_i+\e_j)\otimes(\e_i-\e_j), i,j,k \text{ distinct}\}, $$
giving the slips
\be
\label{gh2}
\av\otimes\n\in\{\pm \e_k\otimes(\e_i+\e_j), \e_k\otimes(\e_i-\e_j), i,j,k\text{ distinct}\}.
\ee
These slips are consistent with those  described by Chalmers \& Martius \cite{chalmers1952slip}, who noted that they had not been reported experimentally; in a more recent survey   Weinberger, Boyce \& Bataille \cite{weinberger2013slip} appear to confirm this, stating that ``{\it It is clear from numerous experiments 
that slip occurs in the closest packed $\langle 111\rangle$ direction
and the Burgers vector is $\half\langle 111\rangle$}''. The slips in \eqref{gh2} have Burgers vector magnitude $|\bv|^2=1$, whereas those in Theorem \ref{slipthm}(b) have $|\bv|^2=\frac{3}{4}$.

It is often stated (see, for example, \cite{jackson2012handbook}) that there are up to 48 slip systems for bcc, consisting of the 12 systems given in Theorem \ref{slipthm} (identifying $\pm\av\otimes\n$) together with 12 having $(112)$ slip planes 
given by 
\be
\label{gh3}
\av\otimes\n\in\left\{\pm\half(\kappa_1\e_i+\kappa_2\e_j-\e_k)\otimes(\kappa_1\e_i-\kappa_2\e_j+2\e_k),\right.&&\\
&&\hspace{-1in}\left. i,j,k \text{ distinct}, \kappa_1=\pm 1, \kappa_2=\pm 1\right\},\nonumber
\ee
with $|\av|^2=\frac{9}{2}$,
  and 24 having $(123)$ slip planes, given by
\be
\label{gh4}
\av\otimes\n\in\left\{\pm\half(\e_i-\kappa_1\e_j-\kappa_2\e_k)\otimes(3\e_i+2\kappa_1\e_j +\kappa_2\e_k),\right.&&\\
&&\hspace{-1in}\left.   i,j,k \text{ distinct}, \kappa_1=\pm 1, \kappa_2=\pm 1\right\},
\nonumber
\ee
with $|\av|^2=\frac{21}{2}$. That these are slips can be checked by verifying that $\av\cdot\n=0$ and $\B^{-1}\av\otimes\B^T\n\in\Zed^3\otimes\Zed^3$. 
These extra slip systems have the same Burgers vector magnitude $|\bv|^2=\frac{3}{4}$ as those in Theorem \ref{slipthm}(b), but considerably higher values of shear magnitude.

For further discussion of the Chalmers \& Martius criterion for bcc crystals see 
\cite{chenmaddin}.

\section{Twinning}
\label{twins}
\subsection{Twins}
\label{twindef}
 In this paper we define $\F=\1+\av\otimes\n$ to be a (mechanical)  {\it twin} if  the lattices $\mL(\B)$ and $\F\mL(\B)$ on either side of the interface are  non-trivially reflected with respect to each other, so that $\F=\1+\av\otimes\n$  is not a slip and satisfies for some unit vector $\m$ 
\be
\label{twin}
\F{\mathcal L}(\B)=(\1-2\m\otimes\m)\mL(\B)=(-\1+2\m\otimes\m)\mathcal L(\B),
\ee
where for the second equality we have used $\mL(\B)=\mL(-\B)$. (Note that if $\F$ is a slip then \eqref{twin} holds for any $180^\circ$ rotation $-\1+2\m\otimes\m \in P(\B)$.)
By Corollary \ref{cor2}, \eqref{twin} holds iff 
\be
\label{twina}
\F=(-\1+2\m\otimes\m)\B\muv\B^{-1}
\ee
 for some $\muv\in GL^+(3,\Zed)$. 

The definition given above of twins corresponds to that given by Christian \cite[p51] {christian2002}, except that we require that $\F$ is not a slip, and grew out of a discussion with R.D. James \cite{jamesprivate}. It covers the cases of Type 1 and Type 2 twins in the next section, but other possibilities too; on the other hand it is less general than that used e.g. by Pitteri \& Zanzotto \cite{pitterizanzotto03}, who give (pp 28,29) references to various different definitions. According to their terminology `nonconventional twins' correspond to rank-one connections  which are neither Type 1 nor Type 2 twins. In Section \ref{conjslip} we give different examples of such `nonconventional twins' which are, and which  are not, twins as defined in this paper. For a related discussion see \cite{chenetal13}.
  
\subsection{Type 1, Type 2 and compound twins}
\label{twintypes}
We shall be interested in {\it Type} 1 twins, for which $\m=\n$, so that the lattices are non-trivially  reflected with respect to the twin plane,   and thus
\be
\label{type1}
\F= (-\1+2\n\otimes\n)\B\muv\B^{-1}
\ee
for some   $\muv\in GL^+(3,\Zed)$, and in {\it Type} 2 twins, for which $\F$ satisfies \eqref{twina} with $\displaystyle\m=\frac{\av}{|\av|}$. From \eqref{4e2} it follows that if $\F=\1+\av\otimes\n$ is a Type 1 twin, then its conjugate $\1+\tilde\av\otimes\tilde\n$ satisfies
\be
\label{type1a}
\1+\tilde\av\otimes\tilde\n=\left(-\1+2\frac{\tilde\av}{|\tilde\av|}\otimes\frac{\tilde\av}{|\tilde\av|}\right)\B\muv\B^{-1},
\ee
and so (see \cite[p254]{pitterizanzotto03}) is a    Type 2 twin (unless it is a slip, which can happen -- see Theorem \ref{twinthm}(a)). Conversely the conjugate to a Type 2 twin, unless it is a slip, is a Type 1 twin. $\F$ is a {\it compound twin} if it is both a Type 1 and Type 2 twin, so that
\be
\label{compound}
\F= (-\1+2\n\otimes\n)\B\muv\B^{-1}=\left(-\1+2\frac{\av}{|\av|}\otimes\frac{\av}{|\av|}\right)\B\nuv\B^{-1}
\ee
for some $\nuv\in GL^+(3,\Zed)$. This holds iff
\be
\label{compound1}
(-\1+2\n\otimes\n)\left(-\1+2\frac{\av}{|\av|}\otimes\frac{\av}{|\av|}\right)\in P(\B),
\ee
or equivalently iff
\be
\label{compound2}
\B^{-1}\left(\1-2\n\otimes\n-2\frac{\av}{|\av|}\otimes\frac{\av}{|\av|}\right)\B\in\Zed^{3\times 3}.
\ee
We will use \eqref{compound2}  to check for compound twins; other criteria are given in \cite[Proposition 8.7]{pitterizanzotto03}, \cite{forclaz99}. Obviously the conjugate of a compound twin is either a slip or a compound twin.

From \eqref{type1}, to find the Type 1 twins (and hence their conjugate Type 2 twins) we need to find $\muv,\av,\n$ such that $\B\muv\B^{-1}\not\in SO(3)$ and
\be 
1+\av\otimes\n=(-\1+2\n\otimes\n)\B\muv\B^{-1}.
\ee
Equivalently
$\muv=-\1+\B^{-1}(2\n-\av)\otimes\B^T\n,$ and
hence, by Lemma \ref{lemtensor}, 
\be \B^{-1}(2\n-\av)\otimes\B^T\n=\p\otimes\q
\ee
 for some $\p,\q\in\mathbb Z^3$, and $\p\cdot\q=2$.
Thus we can suppose that 
\be 
\label{compound3}\n=\frac{\B^{-T}\q}{|\B^{-T}\q|},\;2\n-\av=|\B^{-T}\q|\B\p,
\ee
 so that 
\be 
\label{compound4}
|\av|^2= |\B\p|^2|\B^{-T}\q|^2-4.
\ee
It follows from \eqref{4e1},\eqref{compound3},\eqref{compound4} that the conjugate Type 2 twin is given by
\be
\label{compound8}
\1+\tilde\av\otimes\tilde\n=\1+\B\p\otimes\left(\B^{-T}\q-2\frac{\B\p}{|\B\p|^2}\right).
\ee

\subsection{Twins of minimum shear magnitude}
\label{twinmin}
From \eqref{compound4},  in order to minimize $|\av|^2$ we have to minimize $|\B\p|^2|\B^{-T}\q|^2$ among $\p,\q\in{\mathbb Z}^3$ with $\p\cdot\q=2$ and $|\B\p|^2|\B^{-T}\q|^2>4$. In the following theorem we give the Type 1/Type 2 twins minimizing the shear magnitude for the simple cubic, bcc and fcc lattices. Note that by Theorem \ref{bccfcc} and Corollary \ref{fccbcccor} the Type 1 (respectively Type 2) twins for fcc are given by  $\1-\n\otimes\av$, where  $\1+\av\otimes\n$ are the Type 2  (respectively Type 1) twins for bcc. The twins identified, and their shear magnitude, correspond to those given in Christian \& Mahajan \cite[Table 1]{christianmahajan1995}, which summarizes earlier  work of Jaswon \& Dove \cite{jaswondove1956,jaswondove1957,jaswondove1960}, Bilby \& Crocker\cite{bilbycrocker1965}, Bevis \& Crocker \cite{beviscrocker1968,beviscrocker1969}.
\begin{theorem}
\label{twinthm}
The Type $1$ and Type $2$ twins $\1+\av\otimes\n$ minimizing $|\av|$ are given by:\vspace{.05in}\\
$\rm(a)$ Simple cubic:  the compound twins \begin{align*}\av\otimes\n&=\frac{1}{5}(2\e_j-\kappa\e_i)\otimes(2\kappa\e_i+\e_j),
\end{align*}
where $i\neq j,\,\kappa=\pm 1$, whose conjugates are the slips $\tilde\av\otimes\tilde\n= \kappa \e_i\otimes\e_j$ in Theorem $\ref{slipthm}(a)$, for which $|\av|^2=1$.\vspace{.05in}\\ 
$\rm(b)$ bcc: the conjugate pairs
\begin{align*}\av\otimes\n&=\frac{1}{6}(\e_j-\e_k+\e_i)\otimes(2\e_j+\e_k-\e_i),\\\tilde\av\otimes\tilde\n&=\frac{1}{6}(\e_j+\e_k-\e_i)\otimes(2\e_j+\e_i-\e_k), 
\end{align*}
and
\begin{align*}
\av\otimes\n&=\frac{1}{6}(\e_1+\e_2+\e_3)\otimes (2\e_i-\e_j-\e_k),\\
\tilde\av\otimes\tilde\n&=\frac{1}{6}(\e_i-\e_j-\e_k)\otimes(2\e_i+\e_j+\e_k),
\end{align*}
where $i,j,k$ are distinct,   which are all  compound twins, and for which $|\av|^2=\half$.\vspace{.05in}\\
$\rm(c)$ fcc: the conjugate pairs\\
\begin{align*}\av\otimes\n&=\frac{1}{6}(-2\e_j-\e_k+\e_i)\otimes(\e_j-\e_k+\e_i),\\\tilde\av\otimes\tilde\n&= \frac{1}{6}(-2\e_j-\e_i+\e_k)\otimes(\e_j+\e_k-\e_i),
\end{align*}
and
\begin{align*}\av\otimes\n&=\frac{1}{6}(-2\e_i+\e_j+\e_k)\otimes(\e_1+\e_2+\e_3), \\\tilde\av\otimes\tilde\n&=\frac{1}{6}(2\e_i+\e_j+\e_k)\otimes(\e_j+\e_k-\e_i),
\end{align*}
where $i,j,k$ are distinct,     which are all compound twins, and for which $|\av|^2=\half$.
\end{theorem}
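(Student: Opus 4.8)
The plan is to handle all three lattices through the minimization already set up before the statement. By \eqref{compound4} every Type 1 twin obeys $|\av|^2=|\B\p|^2|\B^{-T}\q|^2-4$ with $\p,\q\in\Zed^3$ and $\p\cdot\q=2$, the genuine (non-slip) twins being exactly those with $\av\neq\bzero$. The first thing I would record is the identity $\p\cdot\q=(\B\p)\cdot(\B^{-T}\q)$, whence Cauchy--Schwarz gives $|\B\p|^2|\B^{-T}\q|^2\geq(\p\cdot\q)^2=4$, with equality precisely when $\B\p$ and $\B^{-T}\q$ are parallel, i.e. precisely when $\av=\bzero$. So minimizing $|\av|$ over honest twins is the same as minimizing the product $|\B\p|^2|\B^{-T}\q|^2$ over integer pairs with $\p\cdot\q=2$ for which the product strictly exceeds $4$. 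Once the minimizing pairs are found, \eqref{compound3} reconstructs $\av\otimes\n$, \eqref{compound8} produces the conjugate, and \eqref{compound2} tests whether the twin is compound.

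For (a) I would take $\B=\1$, so the product is the integer $|\p|^2|\q|^2\geq 4$; forbidding the degenerate value $4$ (attained only when $\p\parallel\q$), the smallest attainable value is $5$, realized up to the $P^{24}$ symmetry of Remark~\ref{Prem} and a sign by $\p=\e_i$, $\q=2\e_i+\e_j$, giving $|\av|^2=1$. Feeding this into \eqref{compound3} yields the stated dyad, \eqref{compound2} collapses to the integer matrix $\1-2\e_i\otimes\e_i-2\e_j\otimes\e_j$ so the twin is compound, and \eqref{compound8} returns the slip $\e_i\otimes\e_j$ of Theorem~\ref{slipthm}(a) as its conjugate, accounting for the special feature of the simple cubic case.

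For (b) I would use the forms $|\B_{\rm bcc}\p|^2=\tfrac14 G(\p)$ and $|\B_{\rm bcc}^{-T}\q|^2=H(\q)$ from the slip calculation, so the quantity to minimize is $\tfrac14 G(\p)H(\q)$ subject to $\p\cdot\q=2$ and $G(\p)H(\q)>16$. The decisive simplification is that $H(\q)$ is always even, so $G(\p)H(\q)$ is even; together with the Cauchy--Schwarz bound $G(\p)H(\q)\geq 16$ this forces every honest twin to satisfy $G(\p)H(\q)\geq 18$, hence $|\av|^2\geq\tfrac12$. Since $G\geq 3$ and $H\geq 2$ is even, the value $18$ requires $(G,H)=(3,6)$, the alternative $(9,2)$ being ruled out because $G(\p)$, a sum of three integer squares of equal parity, never equals $9$; enumerating $\p\in\{\pm\e_i,\pm(\e_1+\e_2+\e_3)\}$ and $\q$ with $H(\q)=6$ under $\p\cdot\q=2$ and discarding the parallel pairs leaves exactly the families listed, each confirmed compound via \eqref{compound2}. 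For (c) I would invoke Theorem~\ref{bccfcc} and Corollary~\ref{fccbcccor}: the map $\1+\av\otimes\n\mapsto\1+\Rv^T\n\otimes\Rv^T\av$ is a shear-magnitude-preserving bijection between the rank-one connections of fcc and bcc that interchanges Type 1 and Type 2 twins, and hence preserves compoundness, so the minimizers for fcc are the images under $\tilde\muv=\omv\muv^T\omv^{-1}$ of those in (b), with $|\av|^2=\tfrac12$.

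The main obstacle is the completeness assertion in (b). The lower bound itself is clinched cleanly by the parity of $H$ and Cauchy--Schwarz, but the real work lies in showing that the excluded pairs are exactly the Cauchy--Schwarz equality cases, in enumerating all $(\p,\q)$ realizing $G(\p)H(\q)=18$ with $\p\cdot\q=2$ modulo the $P^{24}$ symmetry, and in confirming that each resulting $\F$ is a genuine compound twin rather than an unrelated rank-one connection. One must also check that no minimal Type 2 twin is overlooked: its conjugate is a Type 1 twin or a slip, and slips satisfy $|\av|^2\geq\tfrac32>\tfrac12$ by Theorem~\ref{slipthm}(b), so that conjugate must be one of the minimal Type 1 twins already produced.
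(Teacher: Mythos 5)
Your proposal follows essentially the same route as the paper: reduce via \eqref{compound4} to minimizing $|\B\p|^2|\B^{-T}\q|^2$ over integer $\p,\q$ with $\p\cdot\q=2$, use the evenness of $H(\q)$ to push the bcc minimum up to $18$, rule out $(G,H)=(9,2)$, enumerate the $(3,6)$ case, verify compoundness via \eqref{compound2}, and transfer to fcc by the duality of Theorem \ref{bccfcc}. Your Cauchy--Schwarz characterization of the degenerate value $4$ and the equal-parity argument for $G\neq 9$ are marginally slicker than the paper's explicit case analysis (which also notes one must check the minimizers are not slips, i.e. $\B^{-1}\av\otimes\B^{T}\n\notin\Zed^{3\times 3}$, a point you should make explicit rather than equating ``non-slip'' with $\av\neq\bzero$), but the structure and all key steps coincide.
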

\begin{proof}
(a) Since $\B=\1$ we have to minimize $|\p|^2|\q|^2$ among $\p,\q\in \Zed^3$ subject to $\p\cdot\q=2$ and $|\p|^2|\q|^2>4$, since $|\p|^2|\q|^2=4$ implies $\av=0$. The minimum is given by $|\p|^2|\q|^2=5$ with minimizers
$$\p\otimes\q\in\{(2\e_j-\kappa\e_i)\otimes\e_j, \;\e_i\otimes (2\e_i+\kappa\e_j),\; i\neq j, \kappa=\pm 1\}.$$
From \eqref{compound3} we have that 
$$\av\otimes\n=\frac{1}{|\q|^2}(2\q-|\q|^2\p)\otimes\q,$$
which together with \eqref{4e1} gives the conjugate pairs in the statement. Note that $\av\otimes\n=\frac{1}{5}(2\e_j-\kappa\e_i)\otimes(2\kappa\e_i+\e_j)$ is not a slip because it does not belong to $\Zed^{3\times 3}$, and is a compound twin by the criterion \eqref{compound2}, since $2(\e_j\otimes\e_j+\e_i\otimes\e_i)\in\Zed^{3\times 3}$.\\

\noindent (b)  For bcc (see the proof of Theorem \ref{slipthm} (b)) we have to minimize $G(\p)H(\q)$ subject to $\p\cdot\q=2$ and $G(\p)H(\q)>16$, where
\begin{eqnarray*}&&G(\p)=(p_2+p_3-p_1)^2+(p_3+p_1-p_2)^2+(p_1+p_2-p_3)^2,\\ &&H(\q)=(q_2+q_3)^2+(q_3+q_1)^2+(q_1+q_2)^2.
\end{eqnarray*}
Since $H(\q)$ is even the minimum value is $\geq 18$ and we show that the value 18 is attained. Also $G(\p)\geq 3$, so that the possibilities are $G(\p)=9, H(\q)=2$, $G(\p)=3, H(\q)=6$. But $G(\p)=9$ implies either (i) that $p_i+p_j-p_k=\pm 3$, $p_i+p_k-p_j=p_j+p_k-p_i=0$ for $i,j,k$ distinct, so that $p_k=0$ and $2p_i=\pm 3$, which is impossible, or (ii) that $p_i+p_j-p_k=2\kappa$, $p_k+p_i-p_j=2\kappa'$, $p_j+p_k-p_i=\kappa''$ for $i,j,k$ distinct and $\kappa,\kappa',\kappa''=\pm 1$, giving $2p_k=2\kappa'+\kappa''$, again impossible. Hence we need only consider the case $G(\p)=3$, $H(\q)=6$.

It is easy to check that $G(\p)=3=1^2+1^2+1^2$ iff 
\be
\label{sbcc1}
\pm\p\in\{\e_i, \e_1+\e_2+\e_3, i=1,2,3\},
\ee
 and that $H(\q)=6=2^2+1^2+1^2$ iff 
\be
\label{sbcc2}
\pm\q\in\{\e_i+\e_j, 2\e_i-\e_j, 2\e_i-\e_j-\e_k, i,j,k \text{ distinct}\}.
\ee
From \eqref{sbcc1},\eqref{sbcc2} we find that $\p\cdot\q=2$ iff
\be
\label{sbcc3}
\p\otimes\q\in \{\e_i\otimes(2\e_i-\e_j), \e_i\otimes(2\e_i-\e_j-\e_k),&&\\
&&\mbox{ }\hspace{-1in} (\e_1+\e_2+\e_3)\otimes(\e_i+\e_j), i,j,k\text{ distinct}\}.\nonumber
\ee
Calculating $\av\otimes\n$ from \eqref{compound3}
for these $\p\otimes\q$ and using \eqref{4e1} we obtain the conjugate pairs in the statement, none of which belong to $\Zed^{3\times 3}$ and so are not slips.

To show that the twins are compound we note that from \eqref{compound3} and the relations $|\av|=\half$, $|\B_{\rm bcc}^{-T}\q|^2=6$, $|\B_{\rm bcc}\p|^2=\frac{3}{4}$ that
\be
\label{sbcc4}
2\B_{\rm bcc}^{-1}\left(\n\otimes\n+\frac{\av\otimes\av}{|\av|^2}\right)\B_{\rm bcc}&&\\&&\mbox{ }\hspace{-1in}=3\G^{-1}\q\otimes\q+24\p\otimes\G\p-8\p\otimes\q-8\G^{-1}\q\otimes\G\p,\nonumber
\ee
where $\G=\B_{\rm bcc}^T\B_{\rm bcc}.$ Since $\G^{-1}, 4\G\in\Zed^{3\times 3}$, the twins are compound by \eqref{compound2}.
\\

\noindent(c) The twins for fcc are obtained from those for bcc using Theorem \ref{bccfcc} and  Corollary \ref{fccbcccor} as described above, which also imply that the twins are compound.\qed
\end{proof}
\begin{remark}
For the cubic lattices we consider it is not in general true that Type 1 and Type 2 twins are compound. For example, we can take $\p=2\e_1-\e_3, \q=2\e_1+\e_2+2\e_3$, so that 
 for the simple cubic lattice we obtain from \eqref{compound3}
$$\av\otimes\n=\frac{1}{9}(-14\e_1+2\e_2+13\e_3)\otimes(2\e_1+\e_2+2\e_3),
$$ 
with conjugate 
$$\tilde\av\otimes\tilde\n=\frac{1}{5}(2\e_1-\e_3)\otimes(6\e_1+5\e_2+12\e_3).$$
Then none of $$\av\otimes\n,\;\ \tilde\av\otimes\tilde\n, \;2\left(\n\otimes\n+\frac{\av\otimes\av}{|\av|^2}\right)$$ belong to $\Zed^{3\times 3}$ so that, by \eqref{compound2}, $1+\av\otimes\n, \1+\tilde\av\otimes\tilde\n$ are a pair of Type 1/Type 2 twins that are not compound. For the same $\p,\q$ it can be checked that the   corresponding pair of conjugate bcc Type 1/Type 2 twins given by
\be
\av\otimes\n&=&\frac{1}{34}(57\e_1-9\e_2-45\e_3)\otimes(3\e_1+4\e_2+3\e_3),\\\tilde\av\otimes\tilde\n&=&\frac{3}{38}(-3\e_1+\e_2+3\e_3)\otimes(23\e_1+24\e_2+15\e_3),\nonumber
\ee
are also not compound.
\end{remark}
\section{General rank-one connections minimizing shear magnitude}
\label{gen}
We  ask what the rank-one connections $\F=\1+\av\otimes\n$ are between $\1$ and $\psi^{-1}(0)=\bigcup_{\muv\in GL^+(3,\mathbb Z)}\SOthree\B\muv\B^{-1}$ that minimize $|\av|^2$. For the simple cubic lattice we have a complete answer.
 \begin{theorem}
\label{twinmin1}
For the simple cubic lattice the rank-one connections $\1+\av\otimes\n=\Rv\muv$, $\Rv\in 
\SOthree$, $\muv\in GL^+(3,\mathbb Z)\setminus\SOthree$ that minimize $|\av|$ are given by the slips in Theorem $\ref{slipthm}(a)$ and the twins in Theorem $\ref{twinthm}(a)$, with the minimum value $|\av|^2=1$.
 \end{theorem}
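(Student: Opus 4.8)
The plan is to turn the minimization into an estimate on $|\muv|^2$ and then classify the integer matrices that achieve the minimum. For the simple cubic lattice $\B=\1$, so $\M=\muv$, and by \eqref{4a1} every rank-one connection satisfies $|\av|^2=|\muv|^2-3$. Hence minimizing $|\av|$ is the same as minimizing $|\muv|^2=\sum_{i,j}\mu_{ij}^2$ over those $\muv\in GL^+(3,\Zed)\setminus\SOthree$ that admit a rank-one connection, i.e.\ (by \eqref{4b}) those with $|\muv|^2=|\muv^{-1}|^2$. First I would prove the lower bound: since $\muv\in GL^+(3,\Zed)$ is invertible with integer entries, each row is a nonzero integer vector, so $|\muv|^2\geq 3$, with equality exactly when $\muv$ is a signed permutation matrix of determinant $1$, that is when $\muv\in P^{24}=\SOthree\cap GL^+(3,\Zed)$. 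As $\muv\notin\SOthree$ we get $|\muv|^2\geq 4$, hence $|\av|^2\geq 1$. The bound is attained, because the slips $\muv=\1\pm\e_i\otimes\e_j$ of Theorem \ref{slipthm}(a) and their conjugate compound twins of Theorem \ref{twinthm}(a) have $|\av|^2=1$; so the minimum equals $1$ and corresponds to $|\muv|^2=4$.

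The substantive part is to show that \emph{every} minimizing connection is one of those already listed. A minimizer has $|\muv|^2=4$, which forces $\muv$ to have exactly four entries equal to $\pm1$ and the rest zero: writing $4=2^2$ would give a single nonzero entry and rank $1$, whereas $4=1+1+1+1$ gives the four unit entries. The key observation I would use is that the two conjugate rank-one connections depend only on the well $\SOthree\muv$, equivalently on $\muv^T\muv$, and that $\muv^T\muv=(\Pv\muv)^T(\Pv\muv)$ for any $\Pv\in P^{24}$. I may therefore replace $\muv$ by $\Pv\muv$ without changing the pair of connections. Since $\muv$ is invertible it contains a transversal of three nonzero entries; permuting rows and adjusting one sign to preserve the determinant (operations in $P^{24}$), I can bring the transversal onto the diagonal, reducing $\muv$ to the form $\diag(\epsilon_1,\epsilon_2,\epsilon_3)+\epsilon_4\,\e_r\otimes\e_c$ with $r\neq c$, $\epsilon_i=\pm1$, and $\epsilon_1\epsilon_2\epsilon_3=1$.

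Two cases then remain. If $(\epsilon_1,\epsilon_2,\epsilon_3)=(1,1,1)$, then $\muv=\1+\epsilon_4\e_r\otimes\e_c$ is a slip matrix, and by Theorem \ref{slipchar} and \eqref{4e1} the two connections to its well are the slip $\epsilon_4\e_r\otimes\e_c$ and its conjugate compound twin, both in the lists. If instead exactly two of the $\epsilon_i$ equal $-1$, say $\epsilon_k=1$, then $\muv=(-\1+2\e_k\otimes\e_k)+\epsilon_4\e_r\otimes\e_c$; when $r$ or $c$ equals $k$ this is a twin matrix $-\1+\p\otimes\q$ with $\p\cdot\q=2$, while when $r,c\neq k$ it acts as the identity on $\e_k$ and as a $180^\circ$ rotation composed with a $2\times2$ shear in the coordinate plane $\{i,j\}=\{1,2,3\}\setminus\{k\}$. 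In every situation the well is again a slip well, so its two connections are a slip of Theorem \ref{slipthm}(a) and its conjugate compound twin of Theorem \ref{twinthm}(a). Together with Remark \ref{Prem}, which recovers the remaining sign and index choices by conjugation with $P^{24}$, this identifies the minimizers exactly.

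The step I expect to be the main obstacle is precisely this last one: verifying that in all cases the well $\SOthree\muv$ is a slip well, so that no genuinely new (`nonconventional') connection of shear magnitude $|\av|^2=1$ can slip through. The delicate point is that $\muv$ itself need not be of the form $\1+\p\otimes\q$ — for instance, a matrix with two negative diagonal entries and $r,c\neq k$ has trace $-1$ rather than $3$ — and one must pass to the well and use left multiplication by $P^{24}$ (which alters the trace of $\muv$ but not the connections) to expose the hidden slip structure; I would guard carefully against mistakenly counting such a $\muv$ as yielding a new rank-one connection.
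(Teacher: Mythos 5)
Your proof is correct and follows essentially the same route as the paper: the integer lower bound $|\muv|^2\geq 3$ with equality only on $P^{24}$, the classification of matrices with $|\muv|^2=4$ as a signed permutation plus one extra $\pm1$ entry, and the factoring out of an element of $P^{24}$ to reveal that every such well is a slip well. The paper compresses your normalization and sign case analysis into the single factorization $\muv=\Q(\1+\kappa'\e_i\otimes\e_j)$ with $\Q\in P^{48}$ and $\e_i\cdot\e_j=0$, from which $\det\Q=1$ follows, but the substance is identical.
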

\begin{proof}
If $\muv\in GL^+(3,\Zed)$ then $|\muv|^2\geq 3$ with $|\muv|^2=3$ iff   $\muv\in\SOthree$ (this follows either by noting that each row and column must contain a nonzero entry, and that if there is a single entry $=\pm 1$ then $\muv$ is a rotation, or by use of the AM$\geq$GM inequality applied to the squares of the singular values of $\muv$).
Since $|\muv|^2=3+|\av|^2$ is an integer, the minimum value of $|\av|^2$ for a rank-one connection is $\geq 1$, and so by Theorems \ref{slipthm}, \ref{twinthm} the minimum value is $|\av|^2=1$, with $|\muv|^2=4$.

But $|\muv|^2=4$ implies that each row and column of $\muv$ contains an entry $\pm1$ and that there is a single additional entry $\pm 1$. Thus
$$\muv=\Q+\kappa\e_k\otimes\e_j$$
for some $\Q\in P^{48}$ with $Q_{kj}=0$ and $\kappa=\pm 1$. Hence
$$\muv=\Q(\1+\kappa'\e_i\otimes\e_j),$$
where $\e_i=\kappa\kappa'\Q^T\e_k$, $\kappa'=\pm 1$, and thus $\e_i\cdot\e_j=0$, from which it follows that $\det\Q=1$ and hence $\Q\in\SOthree$. The corresponding rank-one connections are then given by $\1+\av\otimes\n=\Rv\Q(\1+\kappa'\e_i\otimes\e_j)$ and are thus those in Theorems \ref{slipthm}, \ref{twinthm}.\qed
\end{proof}
For bcc and fcc we have the following result.
\begin{theorem}
For bcc and fcc the squared magnitude $|\av|^2$ for any rank-one connection $\1+\av\otimes\n=\Rv\B\muv\B^{-1}\not\in\SOthree$, $\muv\in GL^+(3,\mathbb Z)$ is given by $|\av|^2=\frac{m}{2}$, where $m$ is a positive integer.
\end{theorem}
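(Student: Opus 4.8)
The plan is to reduce everything to showing that $m:=2|\av|^2$ is a positive integer. By \eqref{4a1}, any rank-one connection $\1+\av\otimes\n=\Rv\M$ with $\M=\B\muv\B^{-1}$ and $\Rv\in\SOthree$ satisfies $|\av|^2=|\M|^2-3$; and since the connection is not in $\SOthree$ we have $\av\neq\bzero$, so $m>0$. Thus it suffices to prove $2|\M|^2\in\Zed$, for then $m=2|\M|^2-6$ is a positive integer.

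First I would express the norm through $\muv$ alone. From $\M^T\M=\B^{-T}\muv^T\G\muv\B^{-1}$ with $\G=\B^T\B$, together with $\B^{-1}\B^{-T}=\G^{-1}$ and cyclicity of the trace,
\begin{equation}
\label{prop-norm}
|\M|^2=\trace(\muv^T\G\muv\G^{-1}).
\end{equation}
For bcc one has $\G=\tfrac14\G_0$ with $\G_0=4\G\in\Zed^{3\times3}$, and since $\B_{\rm bcc}^{-1}\in\Zed^{3\times3}$ also $\G^{-1}=\B_{\rm bcc}^{-1}\B_{\rm bcc}^{-T}\in\Zed^{3\times3}$. As $\muv\in\Zed^{3\times3}$, \eqref{prop-norm} gives $|\M|^2=\tfrac14N$ with $N:=\trace(\muv^T\G_0\muv\G^{-1})\in\Zed$. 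Hence $2|\M|^2=N/2$, and the whole statement for bcc collapses to the single assertion that $N$ is even.

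This parity is the only real obstacle, and I would settle it by reducing modulo $2$. Over $\mathbb F_2$ one checks directly that $\G_0\equiv J$ and $\G^{-1}\equiv J+\1$, where $J=\mathbf u\mathbf u^T$ is the all-ones matrix, $\mathbf u=\e_1+\e_2+\e_3$, and $\1$ is the identity. Writing $\mathbf w=\muv^T\mathbf u$ for the vector of column sums of $\muv$, the rank-one form of $J$ yields $\muv^TJ\muv=\mathbf w\mathbf w^T$, so that
\begin{equation}
\label{prop-mod2}
N\equiv\trace(\mathbf w\mathbf w^T(J+\1))=(\mathbf u\cdot\mathbf w)^2+|\mathbf w|^2\pmod 2 .
\end{equation}
Now $\mathbf u\cdot\mathbf w=\mathbf u^T\muv^T\mathbf u=S$, the sum of all entries of $\muv$, while over $\mathbb F_2$ one has $S^2\equiv S$ and $|\mathbf w|^2=\sum_kw_k^2\equiv\sum_kw_k=S$; substituting into \eqref{prop-mod2} gives $N\equiv S+S\equiv0$. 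Thus $N$ is even and the bcc case is complete.

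Finally, the fcc case I would deduce from the duality of Theorem \ref{bccfcc} and Corollary \ref{fccbcccor} rather than recomputing. Each rank-one connection $\1+\av\otimes\n=\Rv\B_{\rm fcc}\muv\B_{\rm fcc}^{-1}$ corresponds to $\1+\Rv^T\n\otimes\Rv^T\av=\Rv^T\B_{\rm bcc}\tilde\muv\B_{\rm bcc}^{-1}$ with $\tilde\muv=\omv\muv^T\omv^{-1}\in GL^+(3,\Zed)$. Rewriting the right-hand connection in normalized form $\1+\av'\otimes\n'$, with $\n'=\Rv^T\av/|\av|$ and $\av'=|\av|\,\Rv^T\n$, gives $|\av'|=|\av|$ since $\Rv\in\SOthree$. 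Hence every fcc shear magnitude equals a bcc one, and is therefore of the form $\sqrt{m/2}$ with $m$ a positive integer, completing the proof. The heart of the argument is the $\mathbb F_2$ computation \eqref{prop-mod2}; everything else is bookkeeping.
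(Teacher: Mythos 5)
Your proof is correct and follows essentially the same route as the paper: reduce fcc to bcc via the duality of Theorem \ref{bccfcc}, write $|\M|^2=\trace(\muv^T\G\muv\G^{-1})$ with $\G=\B_{\rm bcc}^T\B_{\rm bcc}$, and show that the resulting quantity lies in $\frac12\Zed$ by a parity check on the column sums of $\muv$. The only difference is cosmetic: the paper establishes the parity via the explicit identity $\sum_i c_i^2+(\sum_i c_i)^2=2(c_1^2+c_2^2+c_3^2+c_2c_3+c_3c_1+c_1c_2)$, whereas you reduce the same expression modulo $2$ using $x^2\equiv x$.
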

\begin{proof}
By Theorem \ref{bccfcc} it suffices to prove the result for bcc, for which we can write
\be
\label{gen1}
\B_{\rm bcc}=-\1+\half\vv\otimes\vv,\; \vv=\e_1+\e_2+\e_3.
\ee
Then, setting $\M=\B_{\rm bcc}\muv\B_{\rm bcc}^{-1}$, $\G=\B_{\rm bcc}^T\B_{\rm bcc}=\1-\frac{1}{4}\vv\otimes\vv$, and noting that $\G^{-1}=\1+\vv\otimes\vv$, we calculate
\be
\nonumber
|\M|^2&=&\tr(\G^{-1}\muv^T\G\muv)\\
&=&|\muv|^2+|\muv\vv|^2-\frac{1}{4}(|\muv^T\vv|^2+(\muv\vv\cdot\vv)^2).
\label{gen2}
\ee
We now note that 
\be
\nonumber
|\muv^T\vv|^2+(\muv\vv\cdot\vv)^2&=&c_1^2+c_2^2+c_3^2+(c_1+c_2+c_3)^2\\
&=&2(c_1^2+c_2^2+c_3^2+c_2c_3+c_3c_1+c_1c_2),\label{gen3}
\ee
where $c_i$ denotes the sum of the entries in the $i^{\rm th}$ column of $\muv$. Since $\muv\in\Zed^{3\times 3}$, $\vv\in\Zed^3$, and $|\av|^2=|\M|^2-3$, the result follows from \eqref{gen2}, \eqref{gen3}.\qed
\end{proof}
\begin{corollary}
\label{cor1}
For bcc and fcc the twins in Theorem $\ref{twinthm} (b), (c)$ minimize $|\av|^2$.
\end{corollary}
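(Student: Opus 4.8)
The plan is to derive the corollary immediately from the theorem that precedes it together with the numerical value of $|\av|^2$ recorded in Theorem \ref{twinthm}. The preceding theorem asserts that for bcc and fcc \emph{every} rank-one connection $\1+\av\otimes\n=\Rv\B\muv\B^{-1}\not\in\SOthree$ with $\muv\in GL^+(3,\mathbb Z)$ satisfies $|\av|^2=\tfrac{m}{2}$ for some positive integer $m$. The whole argument then reduces to identifying the smallest admissible value of $m$ and checking that it is attained by the twins in Theorem \ref{twinthm}(b),(c).

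First I would observe that we are only concerned with genuine (nontrivial) rank-one connections, i.e. those with $\M=\B\muv\B^{-1}\not\in\SOthree$. For such a connection $\av\neq\bzero$: if $\av=\bzero$ then $\Rv\M=\1$, forcing $\M=\Rv^{-1}\in\SOthree$, a contradiction. Hence $|\av|^2>0$. Since $|\av|^2=\tfrac{m}{2}$ with $m$ a positive integer, we conclude $m\geq 1$ and therefore $|\av|^2\geq\tfrac12$ for every rank-one connection of the bcc or fcc energy wells.

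Next I would invoke Theorem \ref{twinthm}(b),(c), which exhibits explicit Type 1/Type 2 twins $\1+\av\otimes\n$ for bcc and fcc with $|\av|^2=\tfrac12$. These connections therefore attain the lower bound $\tfrac12$ just established, and so they minimize $|\av|^2$ among all rank-one connections between the energy wells. This proves the corollary.

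The point to emphasize is that there is essentially no obstacle here beyond bookkeeping: the real work lies in the two results being quoted. The preceding theorem supplies the arithmetic quantization $|\av|^2\in\tfrac12\mathbb Z_{>0}$ (via the column-sum identities \eqref{gen2}, \eqref{gen3} for the representation \eqref{gen1} of $\B_{\rm bcc}$, transferred to fcc by Theorem \ref{bccfcc}), while Theorem \ref{twinthm} supplies attainment of the value $\tfrac12$. The only subtlety worth stating explicitly is the exclusion of the trivial case $\av=\bzero$, which guarantees $m\geq 1$ rather than merely $m\geq 0$; once this is noted, the minimality is immediate.
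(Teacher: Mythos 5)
Your proposal is correct and follows exactly the paper's argument: the preceding theorem already quantizes $|\av|^2$ to positive half-integers, so the value $\tfrac12$ attained by the twins of Theorem \ref{twinthm}(b),(c) is the least possible. The extra remark about excluding $\av=\bzero$ is harmless but already subsumed in the theorem's statement that $m$ is a \emph{positive} integer.
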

\begin{proof}
The twins satisfy $|\av|^2=\half$, the least possible value.\qed
\end{proof}
\begin{remark}
Corollary \ref{cor1} leaves unresolved the possibility that some other rank-one connection for bcc or fcc also gives the value $|\av|^2=\half$. This could be decided computationally, since there are only finitely many possibilities for $\muv$, for each of which the existence of rank-one connections could be checked. Indeed, if $|\M|^2=3+\half=\frac{7}{2}$ then $|\muv|^2\leq|\B_{\rm bcc}^{-1}|^2|\M|^2|\B_{\rm bcc}|^2=47.25$ so that $|\muv|^2\leq 47$. However we can get a better estimate by noting that     \eqref{gen2} can be rewritten as
\be\nonumber
|\M|^2
=\frac{1}{4}|\muv|^2+\frac{1}{4}|\muv\vv|^2+\frac{1}{12}|3\muv^T-\muv^T\vv\otimes\vv|^2+\frac{1}{12}|3\muv\vv-(\muv\vv\cdot\vv)\vv|^2\nonumber.
\ee
Since $\muv$ is nonsingular we have that  $|\muv\vv|^2\geq 1$. Also $\muv^T$ is not a rank-one matrix, so that $|3\muv^T-\muv^T\vv\otimes\vv|^2\geq 1$.  Hence $|\muv|^2\leq 13-\frac{1}{3}$ and so $|\muv|^2\leq 12$.
\end{remark}

\section{Conjugates of slips}
\label{conjslip}
We first ask when the conjugate of a slip is a slip.
\begin{theorem}
\label{slipslip}
For simple cubic, bcc and fcc the only slips whose conjugates are slips are given by the conjugate pairs 
\begin{align*}\av\otimes\n&= 2\kappa\e_i\otimes \e_j,\\ \tilde\av\otimes\tilde\n&=(\e_j-\kappa\e_i)\otimes (\kappa\e_i+\e_j)
\end{align*}
 with $i\neq j$ and $\kappa=\pm 1$, for which $|\av|^2=4$.
\end{theorem}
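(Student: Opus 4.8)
The plan is to characterize slips whose conjugates are also slips by combining the slip characterization of Theorem \ref{slipchar} with the explicit conjugate formula \eqref{4e1}. A slip $\1+\av\otimes\n$ corresponds via Theorem \ref{slipchar}(ii) to integer vectors $\p,\q\in\Zed^3\setminus\{\bzero\}$ with $\p\cdot\q=0$ and $\av\otimes\n=\B\p\otimes\B^{-T}\q$, with $|\av|^2=|\B\p|^2|\B^{-T}\q|^2$. The conjugate $\1+\tilde\av\otimes\tilde\n$ is given by \eqref{4e1}, namely
\be
\label{sps1}
\tilde\av\otimes\tilde\n=\frac{1}{4+|\av|^2}(2\n-\av)\otimes(2\av+|\av|^2\n).
\ee
By Theorem \ref{slipchar} again, the conjugate is a slip precisely when $\B^{-1}\tilde\av\otimes\B^T\tilde\n\in\Zed^3\otimes\Zed^3$ (equivalently, lies in $\Zed^{3\times 3}$), with $\B^{-1}\tilde\av\cdot\B^T\tilde\n=\tilde\av\cdot\tilde\n=0$ automatic. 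So the first step is to write $\B^{-1}\tilde\av\otimes\B^T\tilde\n$ explicitly in terms of $\p,\q$ and require this to be an integer matrix.

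The key computation is to substitute $\n=\tau\B^{-T}\q/|\B^{-T}\q|$ and $\av=\tau\B\p|\B^{-T}\q|$ into \eqref{sps1} and then apply $\B^{-1}(\cdot)$ on the left and $(\cdot)\B^{T}$ on the right. Writing $G=\B^T\B$ so that $|\B\p|^2=\p\cdot G\p$, $|\B^{-T}\q|^2=\q\cdot G^{-1}\q$ and $\B^{-1}\av=\p|\B^{-T}\q|$, $\B^T\n=\q/|\B^{-T}\q|$, one finds that $\B^{-1}\tilde\av$ and $\B^T\tilde\n$ are rational combinations of $\p$, $\q$, $G\p$ and $G^{-1}\q$. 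For the simple cubic case $G=\1$ this reduces to a clean condition on the integer vectors $\p,\q$ alone; the denominator $4+|\av|^2=4+|\p|^2|\q|^2$ must clear against the numerators. I expect the integrality constraint to force $|\p|^2|\q|^2$ to be small, and a short enumeration (paralleling the minimization arguments in the proofs of Theorems \ref{slipthm} and \ref{twinthm}) should pin down $\av\otimes\n=2\kappa\e_i\otimes\e_j$ as the only possibility up to the $P^{24}$ symmetry of Remark \ref{Prem}, with $|\av|^2=4$; checking \eqref{sps1} then yields the stated conjugate $\tilde\av\otimes\tilde\n=(\e_j-\kappa\e_i)\otimes(\kappa\e_i+\e_j)$.

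For bcc and fcc I would avoid repeating the calculation from scratch by invoking Theorem \ref{bccfcc} and Corollary \ref{fccbcccor}: the correspondence there sends a slip and its conjugate for one lattice to a rank-one connection of the dual type (with $\av,\n$ interchanged up to the rotation $\Rv$), and it preserves the property of being a slip since by Theorem \ref{slipchar} slips are exactly the rank-one connections with $\Rv=\1$, a property respected by the transpose/conjugation symmetry. Thus the "slip whose conjugate is a slip" condition transfers between a lattice and its dual, and since simple cubic is self-dual up to scale, the bcc and fcc conclusions follow from the simple cubic analysis together with the observation that the listed pair is invariant in form under the relevant substitutions. The main obstacle is the bookkeeping in the explicit integrality computation for general $G$: one must verify that the only way the denominator $4+|\av|^2$ divides the integer-matrix numerator is the minimal configuration, ruling out sporadic larger solutions. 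I would handle this by showing the integrality of both $\B^{-1}\tilde\av\otimes\B^T\tilde\n$ and $\B^{-1}\av\otimes\B^T\n$ forces $|\av|^2=4$ exactly (the value at which \eqref{sps1} degenerates so that $2\n-\av$ and $2\av+|\av|^2\n$ become proportional to lattice-aligned integer vectors), and then a finite check completes the classification.
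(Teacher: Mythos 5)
The crux of your plan --- that integrality of $\B^{-1}\tilde\av\otimes\B^T\tilde\n$ forces $|\av|^2=4$, after which ``a short enumeration'' finishes --- is precisely the content of the theorem, and you offer no mechanism for it: unlike the minimization arguments in Theorems \ref{slipthm} and \ref{twinthm}, there is no a priori bound on $|\p|^2|\q|^2$ here, so no finite enumeration is available from the divisibility condition alone. The paper obtains the needed finiteness by a different route: if both members of a conjugate pair are slips then $\Q=(\1+\tilde\av\otimes\tilde\n)(\1+\av\otimes\n)^{-1}$ lies in the finite group $P^{24}$, whence $\tr\Q=(12-|\av|^2)/(4+|\av|^2)$ is an integer and, after discarding the trace value $2$ (not attained in $P^{24}$), $|\av|^2\in\{4,12\}$; the axis and angle of $\Q$ then constrain $\n$ and $\av/|\av|$. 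You would need to supply an equivalent step. Worse, the claim you propose to prove is false as stated: for the simple cubic lattice take $\p=\e_1-\e_2$, $\q=\e_1+\e_2-2\e_3$, so that $\p\cdot\q=0$ and $|\av|^2=|\p|^2|\q|^2=12$; then $4+|\av|^2=16$ and \eqref{4e1} gives $\tilde\av\otimes\tilde\n=(-\e_1+2\e_2-\e_3)\otimes(\e_1-\e_3)\in\Zed^{3\times 3}$ with $\tilde\av\cdot\tilde\n=0$, so the conjugate is again a slip, the connecting rotation being the cyclic permutation matrix, a $2\pi/3$ rotation about $\e_1+\e_2+\e_3$ that does lie in $P^{24}$. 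An honest execution of your enumeration would therefore produce conjugate slip pairs outside the stated list; the integrality condition by itself does not single out $|\av|^2=4$. (The same example bears on the case $m=0$ of the paper's own proof, where the rotation angle is $2\pi/3$ rather than $\pi/3$, and $P^{24}$ does contain such rotations.)

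Separately, your reduction of bcc and fcc to the simple cubic case via duality does not work. By Remark \ref{bccfccdual} the dual of fcc is a rescaled bcc lattice, not a simple cubic one, so Theorem \ref{bccfcc} transfers the ``slip whose conjugate is a slip'' property between bcc and fcc but gives no information about either of them from the simple cubic analysis; the self-duality of the simple cubic lattice is irrelevant to them. After the rotation $\Q$ has been pinned down, bcc requires its own integrality check $\B_{\rm bcc}^{-1}\av\otimes\B_{\rm bcc}^T\n\in\Zed^{3\times 3}$, which is what the paper carries out by parametrizing $\n$ and $\av/|\av|$ by an angle $\theta$ in the plane orthogonal to the axis of $\Q$ and solving for the admissible $\theta$.
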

\begin{proof}
We need to find the slips $\1+\av\otimes\n$ such that 
\be
\label{ss20}
\Q&=&(\1+\tilde\av\otimes\tilde\n)(\1+\av\otimes\n)^{-1}\\
&=&\1-\frac{2|\av|^2}{4+|\av|^2}\left(\n\otimes\n+\frac{\av}{|\av|}\otimes\frac{\av}{|\av|}\right)+\frac{4}{4+|\av|^2}\left(\n\otimes\av-\av\otimes\n\right)\in P^{24},\nonumber
\ee
where we have used \eqref{4e1} and $(\1+\av\otimes\n)^{-1}=\1-\av\otimes\n$.

For this to be the case we need that 
\be
\label{ss21}\tr\Q=\frac{12-|\av|^2}{4+|\av|^2}=m\in\Zed,
\ee
so that $|\av|^2=\displaystyle 4\left(\frac{3-m}{1+m}\right)$, giving the possibilities $m=0,1$ or 2.

The case $m=2$ is impossible since no $\Q\in P^{24}$ has $\tr\Q=2$. The case $m=0$ gives $|\av|^2=12$ and
\be
\label{ss22}
\Q=\1-\frac{3}{2}(\n\otimes\n+\nuv\otimes\nuv)+\sqrt\frac{3}{2}(\n\otimes\nuv-\nuv\otimes\n),
\ee
where $\nuv=\displaystyle\frac{\av}{|\av|}$. But this is a rotation about $\nuv\wedge\n$ through an angle $\displaystyle\pm \frac{\pi}{3}$, and there are no such elements of $P^{24}$.

Thus we only have to consider the case $m=1$, for which $|\av|^2=4$ and
\be
\label{ss23}
\Q=\1-(\n\otimes\n+\nuv\otimes\nuv)+\n\otimes\nuv-\nuv\otimes\n.
\ee
This is a rotation through an angle $\pm\pi/2$ about $\nuv\wedge\n$, and the only such rotations in $P^{24}$ are those with axes $\e_i$, so that without loss of generality we can suppose that 
\be
\label{ss24}
\Q=\left(\begin{array}{ccc}0&-\kappa&0\\ \kappa&0&0\\0&0&1\end{array}\right),
\ee
where $\kappa=\pm 1$, and $\n=\cos\theta\,\e_1+\sin\theta\,\e_2$, $\nuv=\kappa(\sin\theta\,\e_1-\cos\theta\,\e_2)$.
For the simple cubic case we then require that 
\be
\label{ss25}
\av\otimes\n=\kappa \left(\begin{array}{ccc}\sin\,2\theta &1-\cos 2\theta&0\\
-1-\cos\,2\theta&-\sin\,2\theta&0\\0&0&\kappa\end{array}\right)\in \Zed^{3\times 3},
\ee
which holds iff $2\theta=k\frac{\pi}{2}$, giving the possibilities in the statement. Similarly, for bcc we require that $\B^{-1}\av\otimes\B^T\n\in\Zed^{3\times 3}$, which again holds iff $2\theta=k\frac{\pi}{2}$.\qed
\end{proof}
It is interesting to calculate the conjugates to the slips in Theorem \ref{slipthm} and  \eqref{gh2}- \eqref{gh4}. For the simple cubic case we have already seen (Theorem \ref{twinthm}(a)) that the rank-one connections conjugate to the slips in Theorem \ref{slipthm}(a) are compound twins. For bcc and fcc we can without loss of generality (using Theorem \ref{bccfcc}) take the case of bcc, when we have the following result.
\begin{theorem}
\label{bccslipconj}
The conjugates of the slips in Theorem $\ref{slipthm}(b)$ are twins which are neither Type $1$ nor Type $2$.
\end{theorem}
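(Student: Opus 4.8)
The plan is to reduce to a single representative slip, compute its conjugate explicitly, verify the conjugate is a genuine twin, and then decide its type by point-group tests.

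By Remark \ref{Prem} it suffices to treat the representative slip $\av\otimes\n=\half(\e_1+\e_2+\e_3)\otimes(\e_1-\e_2)$, every other slip in Theorem \ref{slipthm}(b) being $\Pv(\av\otimes\n)\Pv^{-1}$ for some $\Pv\in P^{24}$; because $\Pv\in P(\B_{\rm bcc})\cap\SOthree$, such conjugation permutes the wells $\B_{\rm bcc}\muv\B_{\rm bcc}^{-1}$, commutes with passing to the conjugate connection, and preserves twin type. For the representative, $\av\parallel(1,1,1)$, $\n\parallel(1,-1,0)$ and $|\av|^2=\tfrac32$. I would first compute the conjugate $\1+\tilde\av\otimes\tilde\n$ from \eqref{4e1}, obtaining $\tilde\av\parallel 2\n-\av$ and $\tilde\n\parallel 2\av+|\av|^2\n$, and check via Theorem \ref{slipchar} that $\B_{\rm bcc}^{-1}\tilde\av\otimes\B_{\rm bcc}^{T}\tilde\n\notin\Zed^{3\times3}$, so that the conjugate is not a slip — consistent with Theorem \ref{slipslip}, since $|\av|^2\neq4$.

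To decide the type I would use that, the slip being a rank-one connection with $\Rv=\1$, equation \eqref{4e4} factors the conjugate rotation as a product of two $180^\circ$ rotations in the two ways $\tilde\Rv=R_{\tilde\av}R_{\n}=R_{\tilde\n}R_{\av}$, where $R_{\vv}:=-\1+2\tfrac{\vv}{|\vv|}\otimes\tfrac{\vv}{|\vv|}$. Substituting into \eqref{4e2},\eqref{4e3} shows that $\1+\tilde\av\otimes\tilde\n$ is a Type 1 twin exactly when the $180^\circ$ rotation $R_{\av}$ about the slip's own direction lies in $P(\B_{\rm bcc})$, and a Type 2 twin exactly when $R_{\n}$ about the slip's own normal lies in $P(\B_{\rm bcc})$, while it is a twin of some kind exactly when $R_{\m}\tilde\Rv\in P^{24}$ for some unit $\m$. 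Each of these is the concrete integrality test $\B_{\rm bcc}^{-1}R\,\B_{\rm bcc}\in\Zed^{3\times3}$ for the relevant $180^\circ$ rotation $R$.

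The main work, and the step I expect to be most delicate, is carrying out these membership computations — deciding which $180^\circ$ rotations conjugate into $\Zed^{3\times3}$ by $\B_{\rm bcc}$, bearing in mind that within $P^{24}$ the only $180^\circ$ rotations are those about the $\langle100\rangle$ and $\langle110\rangle$ axes. Once the Type 1 and Type 2 questions have been settled in this way, it remains to produce an explicit $\Q\in P^{24}$ for which $\Q\tilde\Rv^{T}$ is a $180^\circ$ rotation; this exhibits the reflection vector $\m$ realizing $\1+\tilde\av\otimes\tilde\n$ as a twin and completes the argument. I expect the bookkeeping of these $P^{48}$-membership tests, rather than any conceptual difficulty, to be the crux.
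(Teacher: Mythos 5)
Your outline follows the paper's proof quite closely: reduce to the representative $\av\otimes\n=\half(\e_1+\e_2+\e_3)\otimes(\e_1-\e_2)$ via Remark \ref{Prem}, compute the conjugate from \eqref{4e1}, exclude its being a slip via Theorem \ref{slipslip}, and then settle the twin type by point-group membership tests. Your reformulation of those tests is correct and is in fact cleaner than a brute-force matrix check: since a slip has $\Rv=\1$, equations \eqref{4e2}--\eqref{4e3} give exactly
\[
\left(-\1+2\frac{\tilde\av}{|\tilde\av|}\otimes\frac{\tilde\av}{|\tilde\av|}\right)\Q=-\1+2\n\otimes\n,\qquad
\left(-\1+2\tilde\n\otimes\tilde\n\right)\Q=-\1+2\frac{\av}{|\av|}\otimes\frac{\av}{|\av|},
\]
so the conjugate is Type 1 iff the $180^\circ$ rotation about the slip direction $\av$ lies in $P^{48}$, and Type 2 iff the $180^\circ$ rotation about the slip normal $\n$ does. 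But the proposal stops exactly where the content is: you never carry out these membership tests, produce no explicit conjugate, no $\Q$, and no reflection vector $\m$, so as written nothing is actually proved.

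Worse, when you do carry out your own test, it does not deliver the stated conclusion. For the representative, $\av\parallel\e_1+\e_2+\e_3$ is a three-fold (not two-fold) axis, so the Type 1 test fails, as required; but $\n\parallel\e_1-\e_2$ is a $\langle110\rangle$ two-fold axis, so $-\1+2\n\otimes\n=\left(\begin{array}{rrr}0&-1&0\\-1&0&0\\0&0&-1\end{array}\right)$, which has determinant $+1$ and lies in $P^{24}$ --- by your criterion the conjugate \emph{is} a Type 2 twin. This is not an artifact of the reformulation: multiplying the matrices in \eqref{ss2},\eqref{ss3} directly gives $\left(-\1+2\frac{\tilde\av\otimes\tilde\av}{|\tilde\av|^2}\right)\Q$ equal to the signed permutation matrix above, and one checks $\1+\tilde\av\otimes\tilde\n=\left(-\1+2\frac{\tilde\av\otimes\tilde\av}{|\tilde\av|^2}\right)\B_{\rm bcc}\nuv\B_{\rm bcc}^{-1}$ with $\nuv=\left(\begin{array}{rrr}1&-2&0\\0&-1&0\\1&-1&-1\end{array}\right)\in GL^+(3,\Zed)$, which is precisely the Type 2 condition. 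So your method proves ``not Type 1'' but refutes ``not Type 2'', and is consistent with Theorem \ref{bccslipconj1}(i), where the slip normal is likewise $\langle110\rangle$ and the conjugate is found to be Type 2. You therefore cannot complete the proof along these lines; before proceeding you must reconcile your (correct) criterion with the paper's assertion that $\left(-\1+2\frac{\tilde\av\otimes\tilde\av}{|\tilde\av|^2}\right)\Q\notin P^{24}$, which appears to be the step in error.
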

\begin{proof}
By Remark \ref{Prem} we need only consider the case
\be
\label{ss1}
\1+\av\otimes\n=\1+\half(\e_1+\e_2+\e_3)\otimes (\e_1-\e_2),
\ee
which by \eqref{4e1} has conjugate
\be
\label{ss2}
\1+\tilde\av\otimes\tilde\n=\1+\frac{1}{22}(\e_1-3\e_2-\e_3)\otimes(7\e_1+\e_2+4\e_3),
\ee
and is not  a slip by Theorem \ref{slipslip}, since $|\av|^2\neq 4$. 
Then the rotation (see \eqref{4e4}) $\Q=\tilde\Rv\Rv^T=(\1+\tilde\av\otimes\tilde\n)(\1+\av\otimes\n)^{-1}$ is given by
\be
\label{ss3}
\Q=\frac{1}{11}\left(\begin{array}{rrr}6&\,\,9&2\\-7&6&-6\\-6&2&9\end{array}\right),
\ee
and setting $\m=\frac{1}{\sqrt{22}}(3\e_1+2\e_2-3\e_3)$ we find that
\be
\label{ss4}
(-\1+2\m\otimes\m)\Q=\left(\begin{array}{rrr}0&0&-1\\1&0&0\\0&-1&0\end{array}\right)\in P^{24},
\ee
so that $\1+\tilde\av\otimes\tilde\n$ is a twin by \eqref{twina}. It is not a Type 1 or Type 2 twin because by computation neither $(-\1+2\tilde\n\otimes\tilde\n)\Q$ nor $\displaystyle\left(-\1+2\frac{\tilde\av\otimes\tilde\av}{|\tilde\av|^2}\right)\Q$ belong to $P^{24}$.\qed
\end{proof}

Next we look at the conjugates of the slips for bcc mentioned in \eqref{gh2}-\eqref{gh4}.
\begin{theorem}
\label{bccslipconj1}\;\\
$\rm(i)$ The conjugates of the slips for bcc with
\be
\label{ss5}
\av\otimes\n\in\{\pm \e_k\otimes(\e_i+\e_j), \e_k\otimes(\e_i-\e_j), i,j,k\text{ distinct}\}
\ee
are Type $2$ twins.\\
$\rm(ii)$ The conjugates of the slips \eqref{gh3}, \eqref{gh4} for bcc  with $(112)$ and $(123)$ slip planes are  twins that are neither Type $1$ nor Type $2$.\\
\end{theorem}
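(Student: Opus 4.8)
The plan is to read off the type of each conjugate from the directions of the slip direction $\av$ and slip normal $\n$ alone, via a criterion that drops out of the factorizations \eqref{4e4}, and then to settle the one substantive point — that the part (ii) conjugates are twins at all — by producing the relevant reflection explicitly. Since the slips in each of \eqref{ss5}, \eqref{gh3}, \eqref{gh4} are permuted among themselves by $P^{24}$, it is enough to treat one representative of each (cf. Remark \ref{Prem}); and by Theorem \ref{slipslip} no conjugate is a slip, because $|\av|^2$ takes the values $2,\tfrac92,\tfrac{21}2$ and never $4$.

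For the criterion, let $\1+\av\otimes\n=\B\muv\B^{-1}$ be a slip (so $\Rv=\1$), let $\1+\tilde\av\otimes\tilde\n=\tilde\Rv\B\muv\B^{-1}$ be its conjugate, and put $\Q=\tilde\Rv$, so that $\Q=\tilde\Rv\Rv^T$ is given by the two factorizations in \eqref{4e4}. Left-multiplying each of these by the appropriate $180^\circ$ rotation gives
\begin{align*}
(-\1+2\tilde\n\otimes\tilde\n)\Q&=-\1+2\tfrac{\av}{|\av|}\otimes\tfrac{\av}{|\av|},\\
\bigl(-\1+2\tfrac{\tilde\av}{|\tilde\av|}\otimes\tfrac{\tilde\av}{|\tilde\av|}\bigr)\Q&=-\1+2\n\otimes\n.
\end{align*}
As $\1+\tilde\av\otimes\tilde\n=\Q\B\muv\B^{-1}$, \eqref{twina} and \eqref{pointgroup} show the conjugate is a twin with reflection $\m$ exactly when $(-\1+2\m\otimes\m)\Q\in P^{24}$; taking $\m=\tilde\n$ and $\m=\tilde\av/|\tilde\av|$ in the two identities, it is a Type $1$ twin iff $-\1+2\tfrac{\av}{|\av|}\otimes\tfrac{\av}{|\av|}\in P^{24}$ and a Type $2$ twin iff $-\1+2\n\otimes\n\in P^{24}$ — that is, iff the $180^\circ$ rotation about the slip direction, respectively the slip normal, is a lattice rotation. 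Since the only $180^\circ$ rotations in $P^{24}$ are about $\langle 100\rangle$ and $\langle 110\rangle$ axes, the classification follows at once: in part (i), $\av\parallel\langle 100\rangle$ and $\n\parallel\langle 110\rangle$, so the conjugate is a Type $2$ twin (indeed compound, as the rotation about $\av$ also lies in $P^{24}$); in part (ii), $\av\parallel\langle 111\rangle$ while $\n\parallel\langle 112\rangle$ for \eqref{gh3} and $\n\parallel\langle 123\rangle$ for \eqref{gh4}, none of which is a $180^\circ$ cube axis, so the conjugate is neither Type $1$ nor Type $2$.

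The crux is the remaining claim in part (ii): that these nonconventional conjugates are nevertheless twins. No abstract argument seems available, so I would compute $\Q=(\1+\tilde\av\otimes\tilde\n)(\1+\av\otimes\n)^{-1}$ from \eqref{4e1} for a representative of each family and then search the finite set $P^{24}$ for a $\Pv$ with $\langle\Q,\Pv\rangle=\trace(\Q\Pv^T)=-1$. Any such $\Pv$ forces $\Q\Pv^T$ to be a $180^\circ$ rotation $-\1+2\m\otimes\m$, whence $(-\1+2\m\otimes\m)\Q=\Pv\in P^{24}$ exhibits $\m$ as the reflection axis. For the $(112)$ representative $\tfrac12(\e_1+\e_2-\e_3)\otimes(\e_1+\e_2+2\e_3)$ this yields $\Pv=\diag(-1,-1,1)$ and $\m\parallel 3\e_1+3\e_2-4\e_3$, and for the $(123)$ representative $\tfrac12(\e_1-\e_2-\e_3)\otimes(3\e_1+2\e_2+\e_3)$ a suitable $\Pv\in P^{24}$ gives $\m\parallel 3\e_1-4\e_2+2\e_3$; in both cases $\m$ differs from $\tilde\n$ and $\tilde\av/|\tilde\av|$, consistent with the twins being neither Type $1$ nor Type $2$. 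The main obstacle is exactly here: there is no a priori reason for the coset $\Q P^{24}$ to meet the set of $180^\circ$ rotations, so the existence of $\Pv$ must be extracted from the explicit form of $\Q$, after which the $P^{24}$-covariance of the construction propagates the conclusion across the whole of \eqref{gh3} and \eqref{gh4}.
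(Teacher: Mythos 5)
Your proof is correct, and for the Type 1/Type 2 classification it takes a genuinely different and cleaner route than the paper. The paper proceeds entirely by explicit matrix computation on one representative per family: it computes $\tilde\av\otimes\tilde\n$ and $\Q$ from \eqref{4e1}, exhibits $\bigl(-\1+2\tilde\av\otimes\tilde\av/|\tilde\av|^2\bigr)\Q\in P^{24}$ for part (i) (see \eqref{ss8}), and for part (ii) finds the twin axis $\m$ and then asserts ``by computation'' that the Type 1 and Type 2 products are not in $P^{24}$. Your criterion --- for a slip, where $\Rv=\1$ and $\Q=\tilde\Rv$, the conjugate is Type 1 iff $-\1+2\av\otimes\av/|\av|^2\in P^{24}$ and Type 2 iff $-\1+2\n\otimes\n\in P^{24}$ --- follows correctly from \eqref{4e4} together with \eqref{twina}, and reduces all of these membership checks to asking whether the slip direction and the slip normal are $2$-fold axes of the cube, i.e.\ of type $\langle100\rangle$ or $\langle110\rangle$. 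This buys both economy and insight: it explains uniformly why \eqref{ss5} gives Type 2 ($\n\parallel\langle110\rangle$) while \eqref{gh3}, \eqref{gh4} give neither ($\av\parallel\langle111\rangle$ and $\n\parallel\langle112\rangle$ or $\langle123\rangle$), and it yields the extra fact, not stated in the theorem, that the part (i) conjugates are compound (since also $\av\parallel\langle100\rangle$; indeed $(-\1+2\tilde\n\otimes\tilde\n)\Q=\diag(-1,-1,1)\in P^{24}$ for the representative \eqref{ss6}, \eqref{ss7}). For the one remaining substantive claim --- that the part (ii) conjugates are twins at all --- you fall back on exactly the paper's method, and your axes $\m\parallel 3\e_1+3\e_2-4\e_3$ and $\m\parallel 3\e_1-4\e_2+2\e_3$ agree with \eqref{ss12}, \eqref{ss16} up to the choice of representative (note the paper's $\m=\frac{1}{34}(3\e_1+3\e_2-4\e_3)$ should read $\frac{1}{\sqrt{34}}(3\e_1+3\e_2-4\e_3)$).

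One warning, so that you do not doubt your criterion when cross-checking it elsewhere in the paper: applied to the slips of Theorem \ref{slipthm}(b), which have $\n\parallel\langle110\rangle$, it says their conjugates are Type 2 twins, whereas Theorem \ref{bccslipconj} asserts they are neither Type 1 nor Type 2. The discrepancy lies in Theorem \ref{bccslipconj}, not in your argument: for the representative \eqref{ss1}--\eqref{ss3} one finds
\begin{equation*}
\left(-\1+2\frac{\tilde\av\otimes\tilde\av}{|\tilde\av|^2}\right)\Q=\left(\begin{array}{rrr}0&-1&0\\-1&0&0\\0&0&-1\end{array}\right),
\end{equation*}
which has determinant $+1$ and so lies in $P^{24}$; equivalently $\B_{\rm bcc}^{-1}\bigl(-\1+2\tilde\av\otimes\tilde\av/|\tilde\av|^2\bigr)(\1+\tilde\av\otimes\tilde\n)\B_{\rm bcc}$ is the integer matrix with rows $(1,-2,0)$, $(0,-1,0)$, $(1,-1,-1)$ and determinant $1$, so \eqref{twina} holds with $\m=\tilde\av/|\tilde\av|$. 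None of this affects Theorem \ref{bccslipconj1}, for which your proof stands.
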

\begin{proof}
(i) As in Remark \ref{Prem} it suffices to consider the case $\av\otimes\n=\e_3\otimes(\e_1+\e_2)$, for   which
\be
\label{ss6}
\tilde\av\otimes\tilde\n=\frac{1}{3}(\e_1+\e_2-\e_3)\otimes(2\e_3+\e_1+\e_2)
\ee
and 
\be
\label{ss7}
\Q=(\1+\tilde\av\otimes\tilde\n)(\1+\av\otimes\n)^{-1}=\frac{1}{3}\left(\begin{array}{rrr}2&-1&\,\,2\\-1&2&2\\-2&-2&1\end{array}\right).
\ee
Then  $\1+\tilde\av\otimes\tilde\n$ is not a slip by Theorem \ref{slipslip}, and
\be
\label{ss8}
\left(-\1+2\frac{\tilde\av\otimes\tilde\av}{|\tilde\av|^2}\right)\Q=\left(\begin{array}{rrr}0&\,\,1&0\\1&0&0\\0&0&-1\end{array}\right)\in P^{24},
\ee
so that $\1+\tilde\av\otimes\tilde\n$ is a Type 2 twin. \\

(ii) For the $(112)$ slips it suffices to consider the case 
\be
\label{ss9}
\av\otimes\n=\half(\e_1+\e_2-\e_3)\otimes(\e_1+\e_2+2\e_3),
\ee
for which
\be
\label{ss10}
\tilde\av\otimes\tilde\n=\frac{1}{34}(-\e_1-\e_2+7\e_3)\otimes (7\e_1+7\e_2+2\e_3)
\ee
and
\be
\label{ss11}
\Q=(\1+\tilde\av\otimes\tilde\n)(\1+\av\otimes\n)^{-1}=\frac{1}{17}\left(\begin{array}{rrr}8&-9&-12\\-9&8&-12\\12&12&-1\end{array}\right).
\ee
Then  $\1+\tilde\av\otimes\tilde\n$ is not a slip by Theorem \ref{slipslip}, and to show that it is a twin we let $\m=\frac{1}{34}(3\e_1+3\e_2-4\e_3)$, so that
\be
\label{ss12}
(\-1+2\m\otimes\m)\Q=-\1+2\e_3\otimes\e_3\in P^{24}.
\ee
It is not a Type 1 or Type 2 twin because by computation neither $(-\1+2\tilde\n\otimes\tilde\n)\Q$ nor $\displaystyle\left(-\1+2\frac{\tilde\av\otimes\tilde\av}{|\tilde\av|^2}\right)\Q$ belong to $P^{24}$.\\

For the  $(123)$ slips it suffices to consider the case
\be
\label{ss13}
\av\otimes\n=\half(\e_1+\e_2-\e_3)\otimes(\e_1+2\e_2+3\e_3),
\ee
for which
\be
\label{ss14}
\tilde\av\otimes\tilde\n=\frac{1}{58}(-5\e_1-3\e_2+13\e_3)\otimes(7\e_1+10\e_2+5\e_3)
\ee
and
\be
\label{ss15}
\Q=(\1+\tilde\av\otimes\tilde\n)(\1+\av\otimes\n)^{-1}=\frac{1}{29}\left(\begin{array}{rrr}12&-24&-11\\-16&3&-24\\21&16&-12\end{array}\right).
\ee
Letting $\m=\frac{1}{\sqrt{29}}(3\e_1-4\e_2-2\e_3)$ we find that
\be
\label{ss16}
(-\1+2\m\otimes\m)\Q=\left(\begin{array}{rrr}0&\,\,0&\,\,1\\0&1&0\\-1&0&0\end{array}\right),
\ee
so that $\1+\tilde\av\otimes\tilde\n$ is a twin, and as for the $(112)$ case one can check that it is not of Type 1 or Type 2.\qed
\end{proof}
\begin{remark}
It is not in general true that the conjugate of a slip is a twin or a slip. For example we can take 
\be
\label{ss17}
\av\otimes\n=2(\e_1+3\e_2+5\e_3)\otimes(\e_1-2\e_2+\e_3),
\ee
which is a slip for both bcc and simple cubic, albeit for a high value of $|\av|^2$. A calculation then shows that
\be
\label{ss18}
\Q=(\1+\tilde\av\otimes\tilde\n)(\1+\av\otimes\n)^{-1}=\frac{1}{211}\left(\begin{array}{rrr}129&114&-122\\94&-177&-66\\-138&-14&-159\end{array}\right),
\ee
and $\Q\Pv$ cannot equal $-\1+2\m\otimes\m$ for any $\m$ because the  absolute values of the entries of $\Q$ are all different, and the action of multiplying on the right by $\Pv$ permutes the columns of $\Q$ with possible changes of sign.

This shows in particular that there are rank-one connections which are neither twins nor slips. Such rank-one connections without mirror symmetry across the interface are observed for martensitic phase transformations, e.g. for ${\rm LaNbO}_4$ (Jian \& Wayman \cite{jianwayman1995}, Jian \& James \cite{jianjames1997}) and for NiMnGa (Seiner et al. \cite{seiner2019}).  

Conjugates of slips are observed in `kink deformations'; see Inamura \cite{inamura2019}, and for earlier work Starkey \cite{starkey1968}.
\end{remark}

\begin{acknowledgements}
This paper grew out of an invitation to give the 2019 Le\c{c}ons Jacques-Louis Lions at Sorbonne Universit\'{e} Paris, and was developed during visits to the Hong Kong Institute for Advanced Study and the Hausdorff Institute of Mathematics (funded by the Deutsche Forschungsgemeinschaft  under Germany's Excellence Strategy – EXC-2047/1 – 390685813). The research forms part of the project {\it Mathematical theory of polycrystalline materials} supported by the EPSRC grant EP/V00204X.
I am grateful to Sergio Conti, Duvan Henao, Tomonari Inamura, Dick James, Kostas Koumatos,  Michael Ortiz, Hanu\v{s} Seiner and Dave Srolowitz for their interest and helpful suggestions, and to the referees for careful reading which led to improvements in presentation.
\end{acknowledgements}

%
 \section*{Conflict of interest}

The author declares that he has no conflict of interest.


\end{document}